\newcommand{\alg}{\textsc{Alg}\xspace}
\newcommand{\opt}{\textsc{opt}\xspace}
\newcommand{\GR}{\textsc{Greedy}\xspace}
\newcommand{\GF}{\textsc{GreedyFavorite}\xspace}
\newcommand{\GGF}{\textsc{GreedyOrGreedyFavorite}\xspace}
\newcommand{\AU}{\textsc{Assign-U}\xspace}
\newcommand{\CM}{\mathcal{M}}
\newcommand{\CJ}{\mathcal{J}}
\newcommand{\good}{\textnormal{good}}
\newcommand{\block}[1]{
	\begin{center}
		{
			\fbox{
				\begin{minipage}[t]{0.95\textwidth}
					#1
				\end{minipage}
			}
		}
	\end{center}
}
\newcommand{\symmetric}[1]{symmetric #1-favorite machines\xspace}
\newcommand{\favorite}{F}
\newcommand{\p}[2][]{p_{{#2}{#1}}} 
\newcommand{\hatp}[2][]{\hat{p}_{{#2}{#1}}} 
\newcommand{\load}[2][]{l_{#1}{(#2)}} 
\newcommand{\sload}[2][]{\ell_{#1}{(#2)}} 
\newcommand{\tload}[2][]{\tilde{l}_{#1}{(#2)}} 
\newcommand{\tp}[2][]{\tilde{p}_{{#2}{#1}}} 
\newcommand{\oload}[2][]{l^*_{#1}{(#2)}}
\newcommand{\toload}[2][]{\tilde{l}^*_{#1}{(#2)}}
\newcommand{\tsload}[2][]{\tilde{\ell}_{#1}{(#2)}} 
\newtheorem{theorem}{Theorem}
\newtheorem{lemma}[theorem]{Lemma}
\newtheorem{corollary}[theorem]{Corollary}
\newtheorem{observation}[theorem]{Observation}
\newtheorem{claim}[theorem]{Claim}
\newtheorem{example}[theorem]{Example}
\begin{document}

\title{\bf Online scheduling of jobs with favorite machines}

\author[1]{Cong Chen}
\author[2]{Paolo Penna}
\author[1,3]{Yinfeng Xu}
\affil[1]{School of Management, Xi'an Jiaotong University, Xi'an, China}
\affil[2]{Department of Computer Science, ETH Zurich, Zurich, Switzerland}
\affil[3]{The State Key Lab for Manufacturing Systems Engineering, Xi'an, China}

\maketitle

\begin{abstract}
This work introduces a natural variant of the online machine scheduling problem on unrelated machines, which we refer to as the \emph{favorite machine} model. In this model, each job has a minimum processing time on a certain set of machines, called \emph{favorite machines}, and some longer processing times on other machines.
This type of costs (processing times) arise quite naturally in many practical problems.
In the online version, jobs arrive \emph{one by one} and must be allocated irrevocably upon each arrival without knowing the future jobs.
We consider online algorithms for allocating jobs in order to minimize the \emph{makespan}.

We obtain {tight bounds} on the competitive ratio of the greedy algorithm and {characterize} the optimal competitive ratio for the favorite machine model. 
Our bounds generalize the previous results of the greedy algorithm and the optimal algorithm for the \emph{unrelated} machines and the \emph{identical} machines.
We also study a further restriction of the model, called the \emph{symmetric favorite machine} model, where the machines are partitioned equally into two groups and each job has one of the groups as favorite machines. 
We obtain a 2.675-competitive algorithm for this case, and the best possible algorithm for the two machines case.
\end{abstract}

\section{Introduction}
\label{sec:introduction}
Online scheduling on the \emph{unrelated} machines is a classical and well-studied problem.
In this problem, there are $n$ jobs that need to be processed by one of $m$ different machines.
The jobs arrive \emph{one by one} and must be assigned to one machine upon their arrival, without knowing the future jobs.
The time to process a job changes from machine to machine, and the goal is to allocate all jobs so as to minimize the \emph{makespan}, that is, the maximum load over the machines.
The load of a machine is the sum of the processing times of the jobs allocated to that machine. 

Online algorithms are designed to solve the problems when the input is not known from the very beginning but released ``piece-by-piece'' as aforementioned.
Since it is generally impossible to guarantee an optimal allocation,  online algorithms are evaluated through the \emph{competitive ratio}. 
An online algorithm whose competitive ratio is $\rho \geq 1$ guarantees an allocation whose makespan is at most $\rho$ times the optimal makespan.

For scheduling on \emph{unrelated} machines, online algorithms have a rather ``bad'' performance in terms of the competitive ratio.
For example, the simple greedy algorithm has a competitive ratio of $m$, the number of machines;
the best possible online algorithm achieves a competitive ratio of $\Theta(\log m)$.
However, some well-known restrictions (e.g., the \emph{identical} machines and \emph{related} machines) admit a much  better performance, that is, a constant competitive ratio.

In fact, many practical problems are neither as simple as the identical (or related) machines cases, nor so complicated as the general unrelated machines. In a sense, practical problems are somewhat ``intermediate'' as the following examples suggest.
\begin{example}[{Two types of jobs} \citep{two-types}]
\label{ex:two-types}
	This restriction of unrelated machines arises naturally when there are two types of products.
	For instance, consider the production of spare parts for cars.
	The manufacturer may decide to use the machines for the production of spare part 1 to produce spare part 2, and vice versa.
	The machine for spare part 1 (spare part 2, respectively) takes time $p$ for part 1 (part 2, respectively), but it can also 	manage to produce part 2 (part 1, respectively) in time $q>p$.
\end{example}
\begin{example}[{CPU-GPU cluster} \citep{CPU-GPU}]\label{ex:GPU-CPU}
	A graphics processing unit (GPU) has the ability to handle various tasks more efficiently than the central processing unit (CPU). 
	These tasks include video processing, image analysis, and signal processing. 
	Nevertheless, the CPU is still more suitable for a wide range of other tasks. 
	A heterogeneous CPU-GPU system consists of a set $M_1$ of GPU processors and a set $M_2$ of CPU processors. 
	The processing time of job $j$ is $p_{j1}$ on a GPU processor and $p_{j2}$ on a CPU processor. 
	Therefore, some jobs are more suitable for GPU and others for CPU. 
\end{example}

Inspired by these examples, we consider the general unrelated machines case by observing that each job in the system has a certain set of \emph{favorite} machines which correspond to the shortest processing time for this particular job. In Example~\ref{ex:two-types}, the machines for spare parts $1$ are the favorite machines for these parts (processing time $p<q$), and similarly for spare parts $2$. In Example~\ref{ex:GPU-CPU}, we also have two type of machines (GPU and CPU) and some jobs (tasks) have GPUs as favorite machines and others have CPUs as favorite machines. 

\subsection{Our contributions and connections with prior work}
We study the online scheduling problem on what we call the \emph{favorite machine} model.
Denote the processing time of job $j$ on machine $i$ by $\p[i]{j}$ and the minimum processing time of job $j$ by $\p{j} = \min_i \p[i]{j}$.
Thus the set of favorite machines of job $j$ is defined as $\favorite_j = \{i|\  \p[i]{j} = \p{j}\}$ and  the favorite machine model is as follows:
\begin{quote}
(\emph{$f$-favorite machines}) This model is simply the unrelated machine setting when every job $j$ has at least $f$ favorite machines ($|\favorite_j|\geq f$).
The processing time of job $j$ on any \emph{favorite} machine $i \in \favorite_j$ is $\p{j}$,
and on any \emph{non-favorite} machine $i \notin \favorite_j$ is an arbitrary value $\p[i]{j} > \p{j}$.
\end{quote}

This model is motivated by several practical scheduling problems.
Besides the \emph{two types of jobs/machines} problems mentioned in Examples~\ref{ex:two-types} and \ref{ex:GPU-CPU},
the model also captures the features of some real life problems.
For example,
workers with different levels of proficiency for different jobs in manufacturing;
tasks/data transfer cost in cloud computing and so on.

It is worth noting that this model interpolates between the \emph{unrelated} machines where possibly only one machine has minimal processing time for the job ($f=1$) and the \emph{identical} machines case $(f=m)$. 
The $f$-favorite machines setting can also be seen as a ``relaxed'' version of \emph{restricted assignment} problem where each job $j$ can be allocated only to a subset $\favorite_j$ of machines:  the  restricted assignment problem is essentially the case where the processing time of a job on a non-favorite machine is always $\infty$. 

We obtain tight bounds on the \GR{} algorithm\footnote{This algorithm assigns each job to a machine whose load after this job assignment is minimized.} and the well-known \AU{} algorithm (designed for unrelated machines by \citet{aspnes1997line}) for the $f$-favorite machines case,
and show the optimality of the \AU{} algorithm by providing a matching lower bound.
For the \GR{} algorithm, the competitive ratio is $\frac{m+f-1}{f}$,
which generalizes the well-known bounds on the competitive ratio of \GR{} for unrelated machines ($f=1$) and identical machines ($f=m$), that is, $m$ and $2 - \frac{1}{m}$, respectively.
The \AU{} algorithm has the optimal competitive ratio $\Theta (\log \frac{m}{f})$, while it is $\Theta (\log {m})$ for unrelated machines.

\paragraph{Easier instances and the impact of ``speed ratio''.} 
Note that whenever $f=\Theta(m)$, the competitive ratio is constant. In particular, for $f =\frac{m}{2}$, \GR{} has a competitive ratio of $3- \frac{2}{m}$.  We consider the following restriction of the model above such that a finer analysis is possible. 
\begin{quote}
	(\emph{\symmetric{$\frac{m}{2}$}}) All machines are partitioned into \emph{two groups} of equal size $\frac{m}{2}$, and each job has favorite machines as exactly one of the two groups (therefore $f=\frac{m}{2}$ and $m$ is even). 
	Moreover, the processing time on non-favorite machines is $s$ times that on favorite machines, where $s\ge 1$ is the \emph{scaling factor} (the ``speed ratio'' between favorite and non-favorite machines).
\end{quote}

We show that the competitive ratio of \GR is at most $\min\{1+(2- \frac{2}{m}) \frac{s^2}{s+1},~s+(2- \frac{2}{m})\frac{s}{s+1},~3- \frac{2}{m}\}~ (< 3)$.
A modified greedy algorithm, called \GF{},  which assigns each job greedily but only among its favorite machines, has a competitive ratio of $2- \frac{1}{f} + \frac{1}{s}~ (< 3)$.
As one can see, the \GR{} is better than \GF{} for smaller $s$, and \GF{} is better for larger $s$.
Thus we can combine the two algorithms to obtain a better algorithm (\textsc{GGF}) with a competitive ratio of at most $\min \{2+\frac{s^2+s-2}{s+1} , 2+\frac{1}{s}\}~ (\le 2.675)$.
Indeed, we \emph{characterize} the \emph{optimal} competitive ratio for the two machines case. That is, for \emph{symmetric $1$-favorite machines}, we show that the \textsc{GGF} algorithm is $\min  \{ 1+ \frac{s^2}{s+1},1+ \frac{1}{s} \}$-competitive and there is a matching lower bound.

For this problem, our results show the \emph{impact of the  speed ratio $s$} on the competitive ratio. This is interesting because this problem generalizes the case of \emph{two related} machines \citep{epstein2001randomized}, for which $s$ is the speed ratio between the two machines.
The two machines case has also been studied earlier from a game theoretic point of view and compared to the two related machines \citep{fullSPoA,Epstein2010}. 

\newcommand{\onlineproblem}[8]{\filldraw[color=#5!60, fill=#5!5, very thick] #3 rectangle #4 node[black,pos=.5] (#1) {#2}  node[#5,xshift = #7, yshift = #8] (B) {#6} ;}

\newcommand{\oldproblem}[5]{\onlineproblem{#1}{#2}{#3}{#4}{red}{#5}{-2cm}{-1.4cm}}
\newcommand{\newproblem}[5]{\onlineproblem{#1}{#2}{#3}{#4}{blue}{#5}{-2cm}{-1.4cm}}

\newcommand{\extraproblem}[8]{\filldraw[color=#5!60, fill=#5!5, very thick] #3 rectangle #4 node[black,pos=.5] (#1) {#2} node[blue,xshift = #7, yshift = #8] (B) {#6} ;}

\begin{figure}[tbp]	
	\centering	
	\begin{tikzpicture}
		
	\oldproblem{related}{two related machines}{(-2,0)}{(2,1)}{$1+\min\{\frac{s}{s+1},\frac{1}{s}\}$, \quad$1+\min\{\frac{s}{s+1},\frac{1}{s}\}~^{(*)}$}
	
	\newproblem{onesymmetric}{symmetric $1$-favorite}{(-2,2)}{(2,3)}{$1+\min\{\frac{s^2}{s+1},1\}$,\quad$1+\min\{\frac{s^2}{s+1},\frac{1}{s}\}~^{(*)}$}
	
	\newproblem{balancedsymmetric}{symmetric $\frac{m}{2}$-favorite}{(-2,4)}{(2,5)}{$\min\{1+(2- \frac{2}{m}) \frac{s^2}{s+1},~s+(2- \frac{2}{m})\frac{s}{s+1},~3- \frac{2}{m}\}$,\quad$2.675$}
	
	\oldproblem{balanced}{balanced two types}{(-2,6)}{(2,7)}{$3- \frac{2}{m}$,\quad $1 + \sqrt{3} \approx 2.732$}
	\newproblem{halffavorite}{$\frac{m}{2}$-favorite machines}{(-5,8)}{(-1,9)}{$3 - \frac{2}{m}$,\quad $-$}
	
	\oldproblem{twotypes}{two types of machines}{(1,8)}{(5,9)}{$1 + \frac{m-1}{f}$ (for $f \le \frac{m}{2}$), \quad $3.85$}
	
	\newproblem{favorite}{$f$-favorite machines}{(-2,10)}{(2,11)}{$1 + \frac{m-1}{f}$, \quad $\Theta(\log \frac{m}{f})~^{(*)}$}
	
	\oldproblem{unrelated}{unrelated machines}{(-2,12)}{(2,13)}{$m, \quad \Theta(\log m)~^{(*)}$}
		
	\draw [->, line width=.6mm, opacity =.1] 
	(related) edge (onesymmetric) (onesymmetric) edge (balancedsymmetric) (balancedsymmetric) edge (balanced) (balanced) edge (twotypes) (balanced) edge (halffavorite) (halffavorite) edge (favorite) (twotypes) edge (favorite) (favorite) edge (unrelated);
	\end{tikzpicture}
	\caption{Comparison between prior problems and results (in red) and our problems and results (in blue). 
	The two bounds below each problem are the competitive ratios for \GR{} and best-known algorithm, respectively, and the ``${(*)}$'' mark represent the optimality of the best-known algorithm.
	Arrows go from a problem to a more general one, and therefore the upper bounds for the general problem apply to the special one as well.
	 }
	\label{fig:problems}
\end{figure}
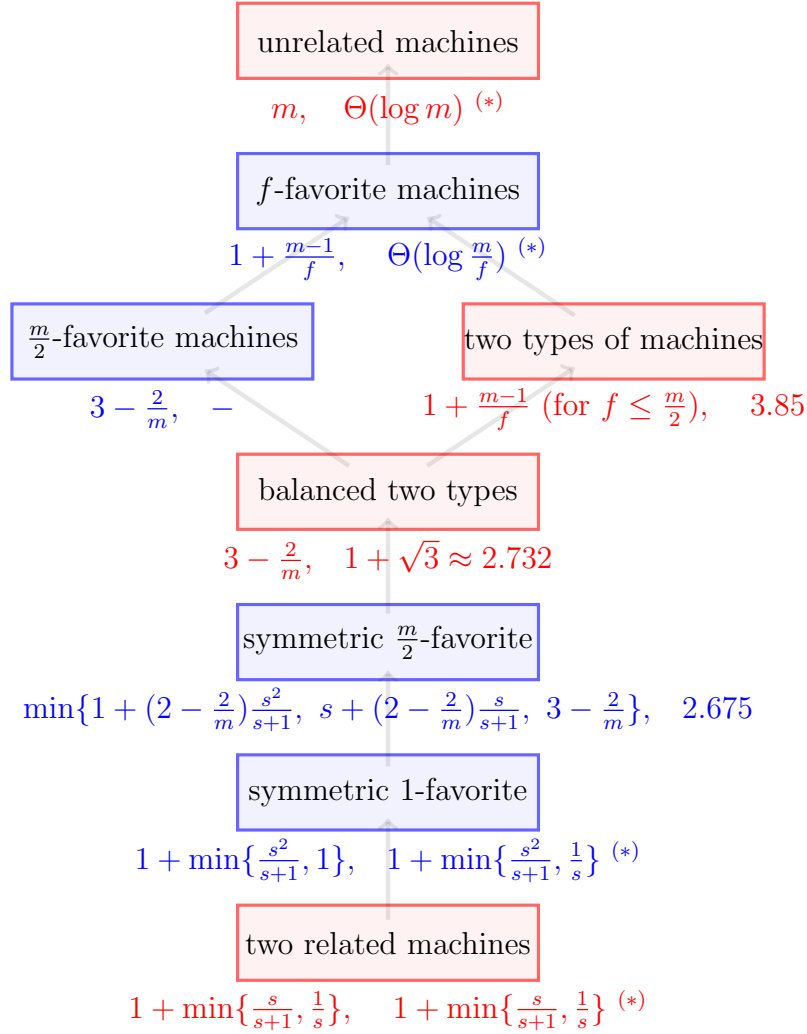

\paragraph{Relations with prior work.}
As mentioned above, the \emph{$f$-favorite machines} is a special case of the \emph{unrelated machines} and a general case of the \emph{identical machines}.
The \emph{symmetric 1-favorite machines} is a generalization of the \emph{2 related machines}.

Besides, there are several other ``intermediate'' problems that have connections with our models and results in the literature (see Figure~\ref{fig:problems} for an overview).
Specifically, in the \emph{two types of machines} case \citep{Imreh2003,CPU-GPU}, they have two sets of machines, $M_1$ and $M_2$, and the machines in each set are identical.
Each job $j$ has processing time $p_{j1}$ for any machine in $M_1$, and $p_{j2}$ for any machine in $M_2$.
The so-called \emph{balanced} case is the restriction in which the number of machines in the two sets are equal. Note that our $\frac{m}{2}$-favorite machines is a generalization of the balanced case, because each job has either $M_1$ or $M_2$ as favorite machines, and thus $f=|M_1|=|M_2|=\frac{m}{2}$.  The \symmetric{$\frac{m}{2}$} case is the restriction of the balance case in which the processing time on non-favorite machines is $s$ times that of favorite ones, i.e., either $p_{j1} = s \cdot p_{j2}$ or $p_{j2} = s \cdot p_{j1}$.

\subsection{Related work} 
\label{sub:related_results}

The \GR algorithm (also known as \textsc{List} algorithm) is a natural and simple algorithm,  often with a provable good performance.
Because of its simplicity, it is widely used in many scheduling problems. In some cases, however, better algorithms exist and \GR is not optimal.

\smallskip
\noindent
\emph{Identical machines.}
For $m$ identical machines, the \GR algorithm  has a competitive ratio exactly $2-1/m$ \citep{graham1966bounds}, and this  is optimal for $m=2,3$ \citep{faigle1989performance}. 
For  arbitrary larger $m$, better online algorithms exist and the bound is still improving \citep{karger1996better,bartal1995new,albers1999better}. Till now the best-known upper bound is 1.9201 \citep{fleischer2000line} and the lower bound is 1.88 \citep{rudin2001improved}.

\smallskip
\noindent
\emph{Related machines (uniform processors).}
The competitive ratio of \GR is ${(1+\sqrt{5})}/{2} \approx 1.618$ for two machines ($m = 2$), and it is at most $1+\sqrt{2m-2}/2$ for $m \ge 3$. The bounds are tight for $m\le 6$  \citep{cho1980bounds}.

When the number of machines $m$ becomes larger, the \GR algorithm is far from optimal, as
its competitive ratio is $\Theta(\log m)$ for arbitrary $m$ \citep{aspnes1997line}.
\citet{aspnes1997line} also devise the \textsc{Assign-R} algorithm with the first constant competitive ratio of 8 for related machines.
The constant is improved to 5.828 by \citet{berman2000line}.

Some research focuses on the dependence of the competitive ratio on speed $s$ when $m$ is rather small.
For the 2 related machines case, \GR has a competitive ratio of $1+\min \{\frac{s}{s+1}, \frac{1}{s}\}$ and there is a matching lower bound \citep{epstein2001randomized}.

\smallskip
\noindent
\emph{Unrelated machines.}
As to the unrelated machines, \citet{aspnes1997line} show that the competitive ratio of \GR is rather large, namely $m$.
In the same work, they present the algorithm \AU{} with a competitive ratio of $\mathcal{O}(\log m)$. A matching lower bound is given by \citet{azar1992competitiveness} in the problem of online restricted assignment.

\smallskip
\noindent
\emph{Restricted assignment.} The online restricted assignment problem is also known as online scheduling subject to arbitrary machine eligibility constraints.
\citet{azar1992competitiveness} show that \GR{} has a competitive ratio less than $\lceil\log_2 m \rceil +1$, and there is a matching lower bound $\lceil \log_2 (m+1) \rceil$.
For other results of online scheduling with machine eligibility we refer the reader to the survey by \citet{Lee2013} and references therein.

\smallskip
\noindent
\emph{Two types of machines (CPU-GPU cluster, hybrid systems).}
\citet{Imreh2003} proves that \GR{} algorithm is $(2+\frac{m_1 -1}{m_2})$-competitive for this case, where $m_1$ and $m_2$ ($\le m_1$) are the number of two sets of machines, respectively. In our terminology, $f= m_2$ and $m_1 = m-f$, meaning that \GR{} is $(1 + \frac{m-1}{f})$-competitive for $f \le \frac{m}{2}$. 
The same work also improves the bound to $4-\frac{2}{m_1}$ with a modified \GR{} algorithm. 
\citet{CPU-GPU} gives a $3.85$-competitive algorithm for the problem, and a simple $3$-competitive algorithm and a more involved $1+\sqrt{3}\approx 2.732$-competitive algorithm for the \emph{balanced case}, that is, the case $m_1=m_2$.

\smallskip
\noindent
\emph{Further models and results.} Some work consider restrictions on the processing times in the offline version of scheduling problems. Specifically, \citet{two-types} consider the case of \emph{two processing times}, where each processing time $p_{ji}\in \{q,p\}$ for all $i$ and $j$.
\citet{kedad2018family} consider the \emph{two types of machines} (CPU-GPU cluster) problem, and
\citet{gehrke2018ptas} consider a generalization, the \emph{few types of machines} problem.
Some work also study similar models in a game-theoretic setting
 \citep{Lavi-Swamy09,Auletta15}. 
Regarding online algorithms, several works consider  \emph{restricted assignment} with additional assumptions on the problem structure like \emph{hierarchical server topologies} \citep{hierarchical}   (see also \citet{CGNP07}).
For other results of processing time restrictions, we refer the reader to the survey by \citet{survey-restricted-assigment}.
Finally, the $f$-favorite machine model in our paper has been recently analyzed in a follow-up paper in the offline game-theoretic setting \citep{CHEN2018}.

\section{Model and preliminary definitions} 
\label{sub:model}
\paragraph{The favorite machines setting.}
There are $m$ machines, $\CM:=\{1,2,\dots,m\}$, to process $n$ jobs, $\CJ:= \{1,2,\dots,n\}$.
Denote the processing time of job $j$ on machine $i$ by $\p[i]{j}$, and the minimum processing time of job $j$ by $\p{j} = \min_{i\in \CM} \p[i]{j}$.
We define the \emph{favorite machines} of a job as the machines with the minimum processing time for the job.
Let $\favorite_j \subseteq \CM$ be the set of favorite machines of job $j$.
Assume that each job has at least $f$ favorite machines, i.e., $|\favorite_j| \ge f$ for $j \in \CJ{}$.
Thus, the processing time of job $j$ on its favorite machines equals to the minimum processing time $\p{j}$ (i.e., $\p[i]{j}= \p{j}$ for $i\in \favorite_j$), while the processing time on its non-favorite machines can be any value that greater than $\p{j}$ (i.e., $\p[i]{j} > \p{j}$ for $i \notin \favorite_j$).

\paragraph{The symmetric favorite machines setting.}
This setting is the following restriction of the favorite machines.  There are $m = 2f$ machines partitioned into two subsets of $f$ machines each, that is, $\CM{}=M_1 \cup M_2$, where $M_1 = \{ 1,2,\dots, f \}$ and $M_2 = \{ f+1,f+2,\dots, 2f \}$.
Each job $j$ has either $M_1$ or $M_2$ as its favorite machines,
i.e., $\favorite_j \in \{ M_1,M_2 \}$. 
The processing time of job $j$ is $\p{j}$ on its favorite machines and $s\cdot \p{j}$ on non-favorite machines, where $s \ge 1$.

\paragraph{Further notation.}
We say that job $j$ is a \emph{good job} if it is allocated to one of its \emph{favorite} machines, and it is a \emph{bad job} otherwise.

Let $\load[i]{j}$ denote the load on machine $i$ after jobs $1$ through $j$ are allocated by online algorithm:
\[
\load[i]{j} = \begin{cases}
\load[i]{j-1} + \p[i]{j} \, , & \text{if job $j$ is assigned to machine $i$}\, ,\\
\load[i]{j-1} \, , & \text{otherwise.}
\end{cases}
\]

In the analysis, we shall sometimes considered the machines in non-increasing order of their loads. For a sequence jobs, we denote by $\sload[i]{j}$ the $i^{th}$ highest load over all machines after the first $j$ jobs are allocated, i.e., 
\begin{align*}
\sload[1]{j} \ge \sload[2]{j} \ge \cdots \ge \sload[m]{j} \, , \quad \text{for any $j \in \CJ$}.
\end{align*}

The jobs arrive \emph{one by one} (over a list) and must be allocated irrevocably upon each arrival without knowing the future jobs. 
the goal is to minimize the \emph{makespan}, the maximum load over all machines.
The \emph{competitive ratio} of an algorithm $A$ is defined as $\rho_A:= \sup_I \frac{C_A(I)}{C_{\opt}(I)},$ where $I$ is taken over all possible sequences of jobs, $C_A(I)$ is the cost (makespan) of algorithm $A$ on sequence $I$, and $C_{\opt}(I)$ is the optimal cost on the same sequence.
We write $C_A$ and $C_{\opt}$ for simplicity whenever the job sequence is clear from the context.

\section{The favorite machine model} 
\label{sec:the_k_favorite_case}
In this section, we first analyze the performance of the  \GR{} algorithm and show its competitive ratio is precisely $\frac{m+f -1}{f}$.
We then show that no online algorithm can be better than $\Omega (\log \frac{m}{f})$ and that algorithm \AU{} \citep{aspnes1997line} has an optimal competitive ratio of $\mathcal O (\log \frac{m}{f})$ for our problem.

\subsection{Greedy Algorithm}  
\label{sub:greedy}
\block{Algorithm \GR{}: Every job is assigned to a machine that minimizes the completion time of this job (the completion time of job $j$ if allocated to machine $i$ is the load $\load[i]{j}$ of machine $i$ after the job is allocated).}

\begin{theorem}
\label{thm.greedyUB}
	The competitive ratio of \GR{} is at most $\frac{m+f-1}{f}$.
\end{theorem}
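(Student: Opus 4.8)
The plan is to run the classical Graham-style argument built around a ``critical machine'' and its last job, adapted to the favorite-machine structure. Let $C$ denote the makespan produced by \GR, attained on a machine $i^*$, and let $j^*$ be the last job that \GR assigns to $i^*$, so that $C=\load[i^*]{j^*}=\load[i^*]{j^*-1}+\p[i^*]{j^*}$. The only place the greedy rule enters is that, when $j^*$ arrived, \GR found $i^*$ at least as good as every favorite machine $i\in\favorite_{j^*}$:
\[
\load[i^*]{j^*-1}+\p[i^*]{j^*}\ \le\ \load[i]{j^*-1}+\p[i]{j^*}\ =\ \load[i]{j^*-1}+\p{j^*}\qquad\text{for every }i\in\favorite_{j^*},
\]
where the equality is just the definition of a favorite machine. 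Writing $a:=\min_{i\in\favorite_{j^*}}\load[i]{j^*-1}$, this yields $C\le a+\p{j^*}$.

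Next I would bound $a$. Since $|\favorite_{j^*}|\ge f$, at least $f$ machines carry load $\ge a$ just before $j^*$ is scheduled, so $\sum_{i\in\CM}\load[i]{j^*-1}\ge f\cdot a$, i.e.\ $a\le\frac1f\sum_{i\in\CM}\load[i]{j^*-1}$. The remaining inputs are the elementary bound $\p{j^*}\le C_{\opt}$ (job $j^*$ costs at least $\p{j^*}$ wherever the optimum places it) and the inequality $\sum_{i\in\CM}\load[i]{n}\le m\cdot C_{\opt}$, i.e.\ the \emph{average} final load of \GR never exceeds the optimal makespan. Granting the latter and using $\sum_{i}\load[i]{j^*-1}\le\sum_{i}\load[i]{n}-\p[i^*]{j^*}\le m\,C_{\opt}-\p{j^*}$, I get
\[
C\ \le\ a+\p{j^*}\ \le\ \frac{m\,C_{\opt}-\p{j^*}}{f}+\p{j^*}\ =\ \frac{m}{f}\,C_{\opt}+\frac{f-1}{f}\,\p{j^*}\ \le\ \frac{m+f-1}{f}\,C_{\opt},
\]
which is exactly the claimed bound; it also cleanly recovers $2-\frac1m$ for $f=m$ and $m$ for $f=1$.

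The main obstacle is the inequality $\sum_{i\in\CM}\load[i]{n}\le m\,C_{\opt}$. For $f=m$ (identical machines) it is immediate, since then $\sum_i\load[i]{n}=\sum_{j\in\CJ}\p{j}\le m\,C_{\opt}$; but in our model \GR may route a job to a non-favorite machine, so its total processing time $\sum_i\load[i]{n}$ can strictly exceed $\sum_j\p{j}$ and must still be controlled. I would prove it by comparing \GR's run with a fixed optimal schedule and arguing that \GR's total processing time never exceeds that of the optimal schedule (which is at most $m\,C_{\opt}$): the intuition is that whenever \GR pays a large cost by placing a job on a non-favorite machine, all of that job's $\ge f$ favorite machines were already heavily loaded in \GR's run, so such expensive steps cannot accumulate without forcing the optimal makespan to grow accordingly. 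Turning this intuition into a clean exchange/charging argument (or a suitable induction on the arrival order) is the technically delicate step; everything else above is bookkeeping.
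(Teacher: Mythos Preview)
Your Graham-style skeleton is exactly the right shape, and up through $C\le a+\p{j^*}$ and $f\cdot a\le \sum_{i\in\CM}\load[i]{j^*-1}$ everything is fine and matches how the paper begins. The gap is precisely the inequality you flag as the ``main obstacle'': $\sum_{i\in\CM}\load[i]{n}\le m\cdot C_{\opt}$ is \emph{false} in this model, so the exchange/charging argument you are hoping for cannot exist.

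Here is a concrete counterexample with $m=3$, $f=1$. Take four jobs (processing time on the favorite machine listed first, and a large value on the others):
job~1 has $\p{1}=0.9$ with favorite $\{1\}$; job~2 has $\p{2}=0.1$ with favorite $\{1\}$ and $\p[2]{2}=\p[3]{2}=1$; job~3 has $\p{3}=0.8$ with favorite $\{2\}$; job~4 has $\p{4}=0.2$ with favorite $\{2\}$ and $\p[1]{4}=\p[3]{4}=2$. One checks that \GR{} (with any tie-breaking favoring non-favorite machines, or after an arbitrarily small perturbation) places job~2 on machine~$2$ and job~$4$ on machine~$3$, ending with loads $(0.9,\,1.8,\,2)$ and total load $4.7$. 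The optimum assigns every job to its favorite machine, giving loads $(1,1,0)$ and $C_{\opt}=1$, so $m\cdot C_{\opt}=3<4.7$. Thus \GR{}'s total load can strictly exceed $m\cdot C_{\opt}$; your intuition that bad placements ``cannot accumulate'' breaks down because each bad job only certifies that its $f$ favorite machines are loaded, which says nothing about the other $m-f$ machines receiving the overflow.

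The fix, which is the paper's key Lemma, is to bound not the sum of \emph{all} loads but only the sum of the \emph{top $f$} loads: for every prefix of jobs, $\sum_{i=1}^{f}\sload[i]{n}\le \sum_{j=1}^{n}\p{j}$. This is proved by a short induction on $n$ (the interesting case is when job $n$ is placed on a non-favorite machine $b$; then some favorite machine $a$ has $\load[b]{n-1}<\load[a]{n-1}\le\sload[f]{n-1}$, and the greedy choice forces $\load[b]{n}\le\sload[f]{n-1}+\p{n}$, so the top-$f$ sum grows by at most $\p{n}$). Once you have this, your own chain goes through verbatim: since at least $f$ machines have load $\ge a$, you get $f\cdot a\le \sum_{i=1}^{f}\sload[i]{j^*-1}\le \sum_{j<j^*}\p{j}\le m\,C_{\opt}-\p{j^*}$, and then $C\le a+\p{j^*}\le \frac{m\,C_{\opt}-\p{j^*}}{f}+\p{j^*}\le \frac{m+f-1}{f}\,C_{\opt}$ exactly as you wrote. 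So the only change needed is to replace the (false) control of the full load vector by the (true) control of its top-$f$ part.
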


The key part in the proof of Theorem~\ref{thm.greedyUB} is the following lemma which says that \GR{} maintains the following invariant: the sum of the $f$ largest machines loads never exceeds the sum of all jobs' minimal processing times.

\begin{lemma}\label{lem1}
	For $f$-favorite machines, and for every sequence of $n$ jobs, the allocation of \GR{} satisfies the following condition:
	\begin{equation}\label{eq.induction}
	  \sload[1]{n} + \sload[2]{n} + \cdots + \sload[f]{n} \leq \p{1} + \p{2} + \cdots + \p{n} \,.
	\end{equation}
\end{lemma}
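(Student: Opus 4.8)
The plan is to prove Lemma~\ref{lem1} by induction on the number of jobs, establishing the stronger \emph{per-step} statement that the arrival of one job raises the sum of the $f$ largest loads by at most its minimum processing time. Writing $L(j) := \sload[1]{j} + \cdots + \sload[f]{j}$, the goal becomes
\[
L(j) \;\le\; L(j-1) + \p{j}\qquad\text{for every } j ,
\]
which telescopes to \eqref{eq.induction} since $L(0)=0$; the base case is trivial. For the inductive step, fix $j$ and let $i^*$ be the machine to which \GR{} assigns job $j$, so $\load[i^*]{j} = \load[i^*]{j-1} + \p[i^*]{j} = \min_{i\in\CM}\bigl(\load[i]{j-1}+\p[i]{j}\bigr)$ and all other loads are unchanged. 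I would denote by $S$ (resp.\ $T$) the set of the $f$ most loaded machines after (resp.\ before) job $j$.

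First the easy situations. If $i^*\notin S$, then $S$ consists of $f$ machines whose loads did not change, so $L(j)\le L(j-1)\le L(j-1)+\p{j}$. If $j$ is a \emph{good} job and $i^*\in S$, then $\p[i^*]{j}=\p{j}$ and the loads of the machines in $S\setminus\{i^*\}$ are unchanged, so
\[
L(j)=\load[i^*]{j-1}+\p{j}+\sum_{i\in S\setminus\{i^*\}}\load[i]{j-1}=\sum_{i\in S}\load[i]{j-1}+\p{j}\le L(j-1)+\p{j},
\]
because the total old load of any $f$ machines is at most $L(j-1)$.

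The crux, and the main obstacle, is a \emph{bad} job $j$ with $i^*\in S$: now $\p[i^*]{j}>\p{j}$, so the crude bound "$L$ grew by $\p[i^*]{j}$'' is too weak, and I must use why \GR{} preferred the non-favorite machine $i^*$. Greedy's choice gives, for every favorite machine $i\in\favorite_j$, $\load[i^*]{j-1}+\p[i^*]{j}\le\load[i]{j-1}+\p{j}$, i.e.\ $\load[i]{j-1}\ge\load[i^*]{j}-\p{j}>\load[i^*]{j-1}$. Since $|\favorite_j|\ge f$ and $i^*\notin\favorite_j$, this yields: (a) at least $f$ machines, all distinct from $i^*$, have pre-arrival load strictly above $\load[i^*]{j-1}$, so $i^*\notin T$; and (b) $\sload[f]{j-1}\ge\load[i^*]{j}-\p{j}$. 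From (a), the $f-1$ largest old loads among machines other than $i^*$ are exactly $\sload[1]{j-1},\dots,\sload[f-1]{j-1}$, hence $\sum_{i\in S\setminus\{i^*\}}\load[i]{j-1}\le\sload[1]{j-1}+\cdots+\sload[f-1]{j-1}$; from (b), $\load[i^*]{j}\le\sload[f]{j-1}+\p{j}$. Adding these,
\[
L(j)=\load[i^*]{j}+\sum_{i\in S\setminus\{i^*\}}\load[i]{j-1}\le\sload[f]{j-1}+\p{j}+\sum_{i=1}^{f-1}\sload[i]{j-1}=L(j-1)+\p{j},
\]
which closes the induction. The only delicate point beyond this is the bookkeeping of the "top-$f$'' sets before versus after the arrival of job $j$—in particular that in the bad case $i^*$ is never in the old top $f$, so deleting it from a top-$f$ set leaves precisely $f-1$ of the old largest loads—which I would state as a small separate observation before the case analysis.
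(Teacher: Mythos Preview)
Your proof is correct and follows essentially the same approach as the paper: induction on the number of jobs, with the key observation that when \GR{} assigns a bad job to machine $i^*$, all $\ge f$ favorite machines had strictly larger pre-arrival load, so $i^*$ was not among the old top $f$ and the greedy inequality yields $\load[i^*]{j}\le\sload[f]{j-1}+\p{j}$. The only cosmetic difference is in the case split---the paper branches on whether $\load[i^*]{j}$ exceeds $\sload[f]{j-1}$ (i.e., whether the bad machine enters the new top $f$), whereas you branch on whether $i^*\in S$; these partitions coincide up to ties and the resulting arithmetic is identical.
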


\begin{proof}
	The proof is by induction on the number $n$ of jobs released so far.
	The base case is $n=1$. Since \GR{} allocates the first job to one of its favorite machines, and the other machines are empty, we have  
	$\sum_{i = 1}^f \sload[i]{1} = \sload[i]{1} = \p{1}$, and thus \eqref{eq.induction} holds for $n=1$.

	As for the inductive step, we assume that \eqref{eq.induction} holds for $n-1$, i.e.,
	\begin{equation}\label{eq.basis}
	\sload[1]{n-1} + \sload[2]{n-1} + \cdots + \sload[f]{n-1} \leq \p{1} + \p{2} + \cdots + \p{n-1} \, ,
	\end{equation}
and show that the same condition holds for $n$, i.e., after job $n$ is allocated.

	If job $n$ is allocated as a \emph{good} job,
	then the left-hand side of \eqref{eq.basis} will increase by at most $\p{n}$,
	while the right-hand side will increase by exactly $\p{n}$. Thus, equation \eqref{eq.induction} follows from the inductive hypothesis \eqref{eq.basis}.
	
	If job $n$ is allocated as a \emph{bad} job on some machine $b$, then the following observation allows to prove the statement: before job $n$ is allocated, the load of each favorite machine for job $n$ must be higher than $\load[b]{n-1}$ (otherwise \GR{} would allocate $n$ as a \emph{good} job). Since there are at least $f$ favorite machines for job $n$, there must be a favorite machine $a$ with 
	\begin{equation}\label{eq:gredy:bad-machine-vs-good-machine}
	\load[b]{n-1} < \load[a]{n-1} \leq \sload[f]{n-1}  \ .
	\end{equation}
	Thus, $\load[b]{n-1}$ is not one of the $f$ largest loads before job $n$ is allocated.
	After allocating job $n$, the load of machine $b$ increases to $\load[b]{n} = \load[b]{n-1} + \p[b]{n}$. We then have two cases depending on whether $\load[b]{n}$ is one of the $f$ largest loads after job $n$ is allocated:

	\smallskip
	\noindent
	\textbf{Case 1 ({$\load[b]{n} \le  \sload[f]{n-1}$}).}
	In this case, 
	the $f$ largest loads remain the same after job $n$ is allocated, meaning that the left-hand side of \eqref{eq.basis} does not change, while the right-hand side increases (when adding job $n$).
	In other words,  \eqref{eq.induction} follows from the inductive hypothesis  \eqref{eq.basis}.

	\smallskip
	\noindent
	\textbf{Case 2 ({$\load[b]{n} > \sload[f]{n-1}$}).}
	In this case,  
	$\sload[f]{n-1}$ will be no longer included in the first $f$ largest loads, after job $n$ is allocated, while $\load[b]{n}$ will enter the set of $f$ largest loads: 
	\begin{equation}\label{eq.lk}
	\sum_{~\: i = 1~\:}^f \sload[i]{n} = \sum_{~\: i = 1 ~\:}^f \sload[i]{n-1} - \sload[f]{n-1} + \load[b]{n}\, .
	\end{equation}
	Since \GR{} allocates job $n$ to machine $b$,
	it must be the case
	\begin{equation}\label{eq.sn}
	\load[b]{n} = \load[b]{n-1} + \p[b]{n} \le \load[a]{n-1} + \p{n} \leq \sload[f]{n-1} + \p{n} \, ,
	\end{equation}
	where $a$ is the favorite machine satisfying \eqref{eq:gredy:bad-machine-vs-good-machine}. 
	Substituting \eqref{eq.sn} into \eqref{eq.lk} and by inductive hypothesis \eqref{eq.basis},
	we obtain
	$\sum_{i = 1}^f \sload[i]{n} \leq \sum_{i = 1}^f \sload[i]{n-1}  + \p{n} \leq \sum_{i=1}^{n-1} \p{i} + \p{n}$,
	and thus \eqref{eq.induction} holds. 
	This completes the proof of the lemma.
\end{proof}

\begin{proof}[Proof of Theorem~\ref{thm.greedyUB}]
	Without loss of generality we, assume that the makespan of the allocation of \GR{} is determined by the  last job $n$
	(otherwise, we can consider the last job $n'$ which determines the makespan, and ignore all jobs after $n'$ since they do not increase $C_\GR$ nor decrease $C_\opt$).

	After allocating the last job $n$, the cost of \GR{} becomes
	\begin{align}\label{eq.Cgreedy}
		C_\GR{}  & \le \min \left\{ \min_{i \in \favorite_n} \load[i]{n-1}+ \p{n},\, \min_{i \in \CM{} \backslash 
		\favorite_n} \load[i]{n-1} + \p[i]{n} \right\} \nonumber\\ 
				&  \le \min_{i \in \favorite_n} \load[i]{n-1}+ \p{n} \nonumber\\
				&  \le \sload[f]{n-1} + \p{n} \, ,
	\end{align}
	where the last inequality holds because job $n$ has at least $f$ favorite machines, and thus the least loaded among them must have load at most $\sload[f]{n-1}$.

	Since the optimum must allocate all jobs on $m$ machines, 
	and job $n$ itself requires $\p{n}$ on a single machine, we have
	\begin{align}\label{eq.Copt}
		C_\opt{} \ge \max \left\{ \frac{\sum_{j=1}^n \p{j}}{m} ,\, \p{n} \right\} \ge \max \left\{ \frac{f \cdot \sload[f]{n-1}+ \p{n}}{m} ,\, \p{n} \right\} \, .
	\end{align}
	where the second inequality is due to Lemma~\ref{lem1} applied to the first $n-1$ jobs only  (specifically, we have $\sum_{j = 1}^{n-1} \p{j} \ge \sum_{i = 1}^f \sload[i]{n-1} \geq f \cdot \sload[f]{n-1}$). 
	By combining \eqref{eq.Cgreedy} and \eqref{eq.Copt}, we obtain
	\begin{align*}
		\frac{C_\GR{}}{C_\opt{}} & \le \frac{\sload[f]{n-1} + \p{n}}{\max \left\{ \frac{f \cdot \sload[f]{n-1} + \p{n}}{m} ,\, \p{n} \right\}} \\
		& =\min \left \{\frac{m\left(\frac{\sload[f]{n-1}}{\p{n}}+1\right)}{f\frac{\sload[f]{n-1}}{\p{n}}+1},\,\frac{\sload[f]{n-1}}{\p{n}}+1 \right\}\\
		& \le \frac{m + f - 1}{f} \,, 
	\end{align*}
	where the last inequality holds because the first term decreases in ${\sload[f]{n-1}}/{\p{n}}$ and the second term increases in ${\sload[f]{n-1}}/{\p{n}}$.
\end{proof}

\begin{theorem}
\label{thm.greedyLB}
	The competitive ratio of \GR{} is at least $\frac{m+f-1}{f}$.
\end{theorem}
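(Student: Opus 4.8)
# Proof Proposal for Theorem~\ref{thm.greedyLB}

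The plan is to construct an explicit adversarial job sequence on which \GR{} is forced to build up a large imbalance, while the optimum spreads everything out evenly. The target ratio is $\frac{m+f-1}{f} = 1 + \frac{m-1}{f}$, so the instance should drive \GR{}'s makespan toward $m+f-1$ (scaling so that $C_\opt = f$). The natural construction proceeds in two phases. In the first phase, release many tiny identical jobs (each with a full set of favorite machines, i.e.\ $\favorite_j = \CM$, processing time $\varepsilon$) so that \GR{} — breaking ties adversarially — piles them so that every machine ends up with load close to some common value; more precisely, arrange that after this phase all $m$ machines have load exactly $t$ for a suitable $t$ (so $\opt$ also has makespan $t$ on these jobs, placing them identically). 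The second phase releases a single ``big'' job $n$ whose favorite set is chosen adversarially to contain exactly $f$ machines, all of which already carry load $t$; on the remaining $m-f$ machines its processing time is slightly larger than $\p{n}$ but small enough that \GR{} still prefers... no — the point is the opposite: we want \GR{} forced onto a loaded favorite machine. So the cleanest version uses a careful staged construction.

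Concretely, I would mirror the standard lower-bound instance for greedy on identical machines but ``thicken'' it by a factor governed by $f$. Phase~$k$ (for $k = 1, 2, \dots$) releases a batch of jobs, each with $\favorite_j$ equal to the whole machine set, sized so that after all batches the $m$ machines have loads in geometric-like layers; the adversary's tie-breaking makes \GR{} stack them unevenly while $\opt$ keeps them level. The key quantitative step is to show that \GR{} can be made to reach a configuration where the $f$ least-loaded machines all sit at some height $h$ while the total work is only about $f \cdot h + (\text{small})$, i.e.\ the bound of Lemma~\ref{lem1} is essentially tight. Then the final job of size $\p{n} = h$ with exactly $f$ favorite machines (those at height $h$) forces $C_\GR \ge h + \p{n} = 2h$ from that machine alone... but to get all the way to $1 + \frac{m-1}{f}$ we instead want $C_\opt = \frac{f h + \p{n}}{m}$ to be the binding constraint, matching the first term inside the $\min$ in the proof of Theorem~\ref{thm.greedyUB} evaluated at $\sload[f]{n-1}/\p{n} = \frac{m-1}{f-1}$ (the crossover point of the two terms). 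So the construction must be tuned to realize exactly $\sload[f]{n-1} = \frac{m-1}{f-1}\,\p{n}$ and simultaneously make $\opt$ able to achieve makespan $\frac{f \sload[f]{n-1} + \p{n}}{m} = \p{n}\cdot\frac{f(m-1)/(f-1) + 1}{m} = \p{n}\cdot\frac{m}{f-1}\cdot\frac{1}{\,\cdot\,}$ — the arithmetic will collapse to the claimed bound.

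The main obstacle, and the part needing genuine care, is simultaneously (i) forcing \GR{}'s tie-breaking and the job processing times on non-favorite machines so that \GR{} actually produces the skewed load vector with $f$ machines at the prescribed height $h = \frac{m-1}{f-1}\p{n}$ and the rest lower, and (ii) exhibiting an explicit optimal assignment of the same job sequence with makespan exactly $\p{n}$ (or $\frac{fh+\p{n}}{m}$, whichever is larger — these must be made equal). Point (ii) is where one must be clever: the ``filler'' jobs of phase one must be flexible enough (many favorite machines, small sizes) that $\opt$ can redistribute them to absorb the big job without exceeding $\p{n}$, yet \GR{}, seeing them online, is trapped. I would handle this by making all filler jobs have $\favorite_j = \CM$ and sizes that are a decreasing geometric sequence, a standard trick ensuring \GR{} stacks while \opt\ levels; then verify the two makespan expressions coincide at the chosen parameters and let $\varepsilon \to 0$ to kill lower-order terms. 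Finally I would remark that at $f=1$ this recovers the known $m$ lower bound for unrelated machines and at $f=m$ the $2 - 1/m$ bound for identical machines, confirming the construction degenerates correctly.
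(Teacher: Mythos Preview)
Your proposal has a genuine gap at its core. You plan to use filler jobs with $\favorite_j = \CM$ (every machine is a favorite), relying on adversarial tie-breaking and a ``decreasing geometric sequence'' to make \GR{} stack them unevenly while \opt{} levels them. But when every machine is a favorite machine for a job, the situation is exactly that of identical machines, and on identical machines \GR{} keeps the loads as balanced as possible regardless of job sizes or tie-breaking: the machine receiving the job is always a least-loaded one, so you can never build up the configuration you need (the $f$ least-loaded machines all at some large height $h$ while the total work stays near $f\cdot h$). No amount of geometric-sizing tricks or tie-breaking changes this. Consequently your Phase~1 produces a perfectly flat load vector, and the single big job in Phase~2 then yields only the identical-machines ratio $2-\tfrac{1}{m}$, not $\tfrac{m+f-1}{f}$. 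You also have an arithmetic slip: the crossover of the two terms in the upper-bound proof occurs at $\sload[f]{n-1}/\p{n}=(m-1)/f$, not $(m-1)/(f-1)$.

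The paper's construction is quite different and uses exactly the structure you discard. It partitions the $m$ machines into $m/f$ groups of size $f$ and releases jobs whose favorite set is one specific group $M_i$, with processing time on non-favorite machines equal to $s$ times the favorite time for a large $s$. Phase~1 interleaves ``good'' jobs (placed on $M_i$) with slightly-smaller jobs whose favorite set is still $M_i$ but which \GR{} prefers to place as \emph{bad} jobs on the next group $M_{i+1}$; this cascades load forward so that the last group $M_{m'}$ carries load $m'-1=\tfrac{m}{f}-1$ while \opt{} can put every job on its own favorite group with makespan~$1$. Phase~2 is then just the identical-machines lower bound restricted to the $f$ machines of $M_{m'}$, adding $2-\tfrac{1}{f}$ and giving the total $\tfrac{m}{f}-1+2-\tfrac{1}{f}=\tfrac{m+f-1}{f}$. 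The essential ingredient you are missing is the use of jobs with \emph{restricted} favorite sets and nontrivial non-favorite costs, which is precisely what lets \GR{} be tricked into placing bad jobs and building up the imbalance.
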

\begin{proof}
	We will provide a sequence of jobs whose schedule by \GR{} has a makespan $\frac{m+f-1}{f}$, and the optimal makespan is 1.
	For simplicity, we assume that in case of a tie,
	the \GR{} algorithm will allocate the job as a \emph{bad job} to a machine with the \emph{smallest index}.
	Without loss of generality, suppose $m$ is divisible by $f$.  We partition the $m$ machines into $m' := \frac{m}{f}$ groups, $ M_1,M_2,\dots,M_{m'} $, each of them  containing $f$ machines, i.e., $M_i = \{ (i-1) f + 1 ,\, (i-1) f + 2 , \dots ,\, (i-1) f + f \}$ for $i = 1, \dots , m'$.

	The jobs are released in two phases (described in detail below):
	In the first phase, we force \GR{} to allocate each machine in $M_i$ a load of \emph{bad} jobs equal to $i-1$, and a load of good jobs equal to $1 - \frac{i}{s}$ for a suitable $s>1$ (except for $M_{m'}$);
	In the second phase, several jobs with favorite machines in $M_{m'}$ are released and
	contribute an additional $2 - \frac{1}{f}$ to the load of one machine in $M_{m'}$, and thus the makespan is $\frac{m+f-1}{f}$.

	We use the notation $r \times (p, F)$ to represent a sequence of $r$ identical jobs whose favorite machines are $F$, the processing time on favorite machines is $p$, and on non-favorite machines is $s \cdot p$ with $s \ge 1$.
	\begin{figure}[tb]
		\centering
		\includegraphics[width=1\textwidth]{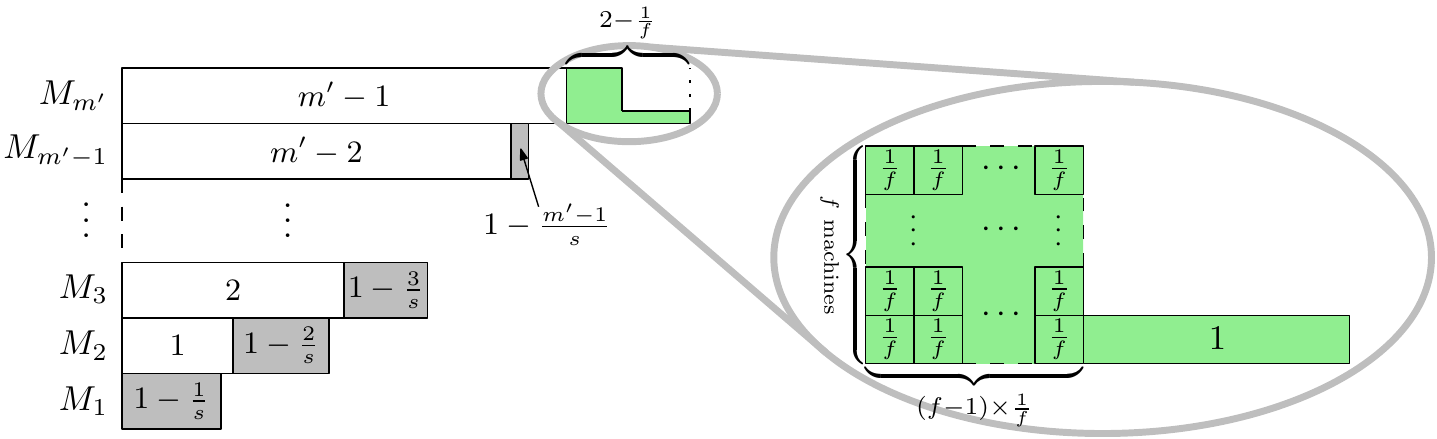}
		\caption{Proof of Theorem~\ref{thm.greedyLB} (bad jobs in white).}
		\label{fig:greedyLB}
	\end{figure}
	
	\smallskip
	\noindent
	\textbf{Phase~1.}
	For each $i$ from $1$ to $m'-1$, release a sequence of jobs $f \times (1- \frac{i}{s},M_i)$ followed by a sequence $f \times (\frac{i}{s},M_i)$.
	In this phase, we take $s > m'-1 + \sqrt{(m'-1)(m'-2)}$, so that,
	for each $i$, the jobs $f \times (1- \frac{i}{s},M_i)$ will be assigned as good jobs to the $f$ machines in $M_i$, one per machine, 
	and the jobs $f \times (\frac{i}{s},M_i)$ will be assigned as bad jobs to the $f$ machines in $M_{i+1}$, one per machine (as shown in Figure~\ref{fig:greedyLB}).
	
	At the end of Phase~1, each machine in $M_i$ ($i \in [1,m'-1]$) will have load  $\frac{i - 1}{s} \cdot s + 1 - \frac{i}{s} = i (1- \frac{1}{s})$,
	and each machine in $M_{m'}$ will have load  $m'-1$.
	The optimal schedule assigns every job to some favorite machine evenly
	so that all machines have load $1$, except for the machines in the last group $M_{m'}$ which are left empty.
	
	\smallskip
	\noindent
	\textbf{Phase~2.}
	In this phase, all jobs released have $M_{m'}$ as their favorite machines,
	specifically $f(f-1) \times (\frac{1}{f}, M_{m'})$ followed by a single job $(1,  M_{m'})$.
	By taking $s > m$,
	no jobs will be allocated as bad jobs by \GR{}.
	Thus, this phase can be seen as scheduling on $f$ identical machines.
	Consequently, all jobs will be allocated as in Figure~\ref{fig:greedyLB} increasing the maximum load of machines in $M_{m'}$ by $2 - \frac{1}{f}$,
	while the optimal schedule can have a makespan 1.
	
	At the end of Phase~2, the maximum load of machines in $M_{m'}$ is $m' - 1 + 2 - \frac{1}{f} = \frac{m+f-1}{f}$, while the optimal makespan is 1.
\end{proof}

\subsection{A general lower bound}  
\label{sub:lower bound}
In this section, we provide a general lower bound for the $f$-favorite machines problem.
The bound borrows the basic idea of the lower bound for the restricted assignment \citep{azar1992competitiveness}.
The trick of our proof is that we always partition the selected machines into arbitrary groups of $f$ machines each, and apply the idea to each of the group respectively.

\begin{theorem}
\label{th:k-favorite-LB}
	Every deterministic online algorithm must have a competitive ratio at least $\frac{1}{2}\lfloor \log_2 \frac{m}{f} \rfloor +1$.
\end{theorem}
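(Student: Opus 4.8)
The plan is to adapt the classical adversary argument of \citet{azar1992competitiveness} for online restricted assignment, but working with \emph{blocks} of $f$ machines rather than single machines. The key observation is that when a job has $f$ favorite machines that all currently carry the same load, the online algorithm is forced to place it as a good job onto one of those $f$ machines (placing it elsewhere as a bad job would only be worse if we choose the non-favorite processing time large enough, say $s \to \infty$, so effectively we are in the restricted-assignment regime). So we may pretend each block of $f$ machines behaves like a single ``super-machine'' to which restricted jobs can be sent, and the adversary's task is to keep halving the number of active blocks while forcing load to concentrate.

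More concretely, I would proceed in rounds $t = 0, 1, 2, \dots, T$ where $T = \lfloor \log_2 \frac{m}{f}\rfloor$. Maintain a set $S_t$ of ``active'' machines with $|S_t| = m / 2^t$ (rounded suitably), partitioned arbitrarily into $|S_t|/f$ blocks of $f$ machines each. In round $t$, for each active block $B$, release a batch of tiny identical jobs whose favorite set is exactly $B$, enough of them that the total work forces the \emph{least loaded} machine of $B$ up to level $t/2 + $ (previous floor), i.e.\ each round raises the minimum load inside a surviving block by roughly $1/2$. Crucially, after the batch for block $B$ is assigned, at least half the machines of $B$ — no, more carefully: at least one machine of $B$ has load at least the block average; I then let $S_{t+1}$ consist, within each block, of the $f/2$ (or the single) machine(s) of lowest load, reassembled into new blocks of size $f$ across the surviving halves of different blocks. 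Repeating $T$ times, some machine ends with load $\ge 1 + T/2 = 1 + \frac12\lfloor\log_2\frac mf\rfloor$ under the algorithm, since at each round the algorithm could not avoid piling $\approx 1/2$ extra onto every machine that survives. Meanwhile $\opt$ can, in hindsight, route each round's batch for a block entirely onto the machines of that block that were \emph{discarded} after that round (they receive no further jobs), so $\opt$ achieves makespan $1$: this is exactly the ``route everything to the machines that die next'' trick, and the block structure makes the counting work because each surviving sub-block has enough discarded partners to absorb its round's work at load $\le 1$.

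The main obstacle — and the place requiring care — is the bookkeeping of the factor $1/2$ and the reassembly of blocks. Unlike the single-machine restricted-assignment argument, here within a block of $f$ machines the adversary cannot force \emph{all} $f$ machines high with one batch: the algorithm spreads the batch, so only the block \emph{average} is controlled, and the adversary must sacrifice the (up to) $f/2$ lightly-loaded machines of each block while keeping the $f/2$ heavier ones, then regroup these heavier half-blocks into fresh full blocks for the next round. This halving of the \emph{machine count} per round (rather than per-block survival being all-or-nothing) is exactly why the bound is $\frac12\lfloor\log_2\frac mf\rfloor+1$ rather than $\lfloor\log_2\frac mf\rfloor+1$: each round costs us a factor $2$ in active machines but buys only $1/2$ in load. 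I would formalize ``force the block average up by $1/2$'' by releasing jobs of size $\varepsilon$ totalling $f/2$ work per block (so the average rises by exactly $1/2$ over the $f$ machines), noting the algorithm, being greedy/online and lacking cheaper options, cannot keep the maximum of the block below average; then the heavier half carries load $\ge 1/2$ above where the round started. Choosing $s$ (the non-favorite slowdown) larger than the total work released guarantees no bad-job placement is ever attractive, so the restricted-assignment intuition is rigorous. Summing the $T$ rounds and adding the initial unit load from a ``base'' batch gives the claimed $\frac12\lfloor\log_2\frac mf\rfloor+1$, with $\opt = 1$ throughout.
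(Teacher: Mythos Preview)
Your approach is essentially the paper's: partition the active machines into blocks of $f$, release $f/2$ units of work per block in each round (with non-favorite processing time large enough to force restricted assignment), keep the $f/2$ \emph{heavier} machines from each block and regroup, and show inductively that the survivors' average load rises by $1/2$ per round, while $\opt$ routes each round's jobs to the discarded half; the final round sends $f$ jobs to the last block of $f$ machines to gain the extra $+1$. One slip to fix: at one point you write that $S_{t+1}$ keeps the machines of \emph{lowest} load, contradicting your later (correct) ``keep the heavier half''; and note that the paper simply uses $f/2$ unit-size jobs per block (plus $f$ unit jobs in the last round) rather than $\varepsilon$-jobs and a separate ``base batch'', which makes both the $+1$ and the $\opt=1$ bookkeeping immediate.
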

\begin{proof}

	Suppose first that $\frac{m}{f}$ is a power of 2, i.e., $\frac{m}{f}=2^{{u} - 1}$. 
	We provide a sequence of $m$ jobs having optimal makespan equal to $1$, while any online algorithm will have a makespan at least $\frac{u+1}{2} = \frac{1}{2}\log_2\frac{m}{f}+1$.
	
	We denote by $r \times (1, F)$ a sequence of $r$ identical jobs whose favorite machines are $F$, the processing time on favorite machines is $1$, and the processing time on non-favorite machines is greater than {$\frac{u+1}{2} = \frac{1}{2}\log_2\frac{m}{f}+1$}, so that all jobs must be allocated to favorite machines.

	We consider subsets of machines $\CM{}_1,\CM{}_2,\ldots, \CM{}_u$, where the first subset is $\CM{}_1= \CM$ and the others satisfy $\CM{}_{i} \subset \CM{}_{i-1}$ and $|\CM{}_i| = \frac{1}{2}|\CM{}_{i-1}| = \frac{m}{2^{i-1}}$ for $i = 2, \ldots u$. 
	We then release $u$ sets of jobs iteratively.
	At each iteration $i$ ($i=1,2,\ldots,u$), a set of jobs $\CJ{}_i$ with favorite machines $\CM{}_u$ are released for allocation, and after the allocation, half of the ``highly loaded'' machines in $\CM{}_i$ are chosen for the next set $\CM{}_{i+1}$.
	More in detail, for each iteration $i$ ($i=1,2,\ldots,u$) we proceed as follows:
	\begin{itemize}[leftmargin=*]
		\item Partition the current set $\CM{}_i$ of machines into arbitrary groups of $f$ machines each, that is, into groups $M_{i,1},M_{i,2}, \dots , M_{i,m_i'}$ where $m_i'=\frac{|\CM{}_i|}{f} = \frac{m}{f \cdot 2^{i-1}} = 2^{{u}-i}$.
		Then, release a set of jobs $\CJ{}_i$:
		\[
		\CJ{}_i :=
		\begin{cases}
			J_{i,1} \cup J_{i,2} \cup \ldots \cup J_{i,m_i'} \,, &\text{ if }1\le i \le {u}-1\,;\\
			\{ f \times (1,\CM{}_{u}) \} \,, &\text{ if }i={u} \,;
		\end{cases}
		\]
		where $J_{i,l}:=\left\{\frac{f}{2}\times (1,M_{i,l}) \right\}$.
		\item The next set  $\CM{}_{i+1}$ of machines consists of the $f/2$ most loaded machines in each subgroup after jobs $\CJ{}_i$ have been allocated. Specifically, we let 
		$			\CM_{i+1}:=M'_{i,1} \cup M'_{i,2} \cup \cdots \cup M'_{i,m_i'}$, for  $M'_{i,l}\subset M{}_{i,l}$ being the subset of $f/2$ most loaded machines in group $M_{i,l}$ after jobs $\CJ{}_i$ have been allocated.
		
	\end{itemize}

	We then show that the following invariant holds at every iteration.
	Note that the following lemma shows a separation of our method and the one of \citet{azar1992competitiveness}.
	Specifically, the maximum load increases by $1/2$ at each iteration in our lower bound, but increases by $1$ in \citet{azar1992competitiveness}.
	\begin{lemma}\label{claim.LB}
		The average load of machines in $\CM{}_i$ is at least $\frac{i-1}{2}$ before the jobs in $\CJ{}_i$ are assigned for $1 \le i \le {u}$.
	\end{lemma}
	\begin{proof}
		The proof is done by induction on $i$.
		For the base case $i=1$, the lemma is trivial. 
		As for the inductive step, suppose the lemma holds for $i$. 

		Denote the average load of $M_{i,l}$ \emph{before} the allocation of jobs $\CJ{}_i$ by $Avg(i,l)$,
		and the average load of $M'_{i,j}$ \emph{after} the allocation of jobs $\CJ{}_i$ by $Avg'(i,l)$.
		We claim that $Avg'(i,l) \ge Avg(i,l)+1/2$ {after} the allocation of jobs $\CJ{}_i$.
		Note that $\CM{}_{i}$ and $\CM{}_{i+1}$ are the union of all groups $M_{i,l}$ and $M_{i,l}'$, respectively.
		Thus, we have that the average load of $\CM{}_{i+1}$ is at least $\frac{i-1}{2} + \frac{1}{2} = \frac{i}{2}$,
		since the average load of $\CM{}_{i}$ is $\frac{i-1}{2}$ (the inductive hypothesis).
		This concludes the proof of the inductive step, and thus the lemma follows.
		Next, we prove that the claim $Avg'(i,l) \ge Avg(i,l)+1/2$ holds:

		\smallskip
		\noindent
		\textbf{Case 1.} There is a machine in $M_{i,l} \backslash M'_{i,l}$ that has load at least $Avg(i,l)+1/2$.
		Since $M'_{i,l}$ consists of the $f/2$ most loaded machines in $M_{i,l}$, each machine in $M'_{i,l}$ will also have load at least $Avg(i,l)+1/2$.
		Thus, we obtain that $Avg'(i,l) \ge Avg(i,l)+1/2$.

		\smallskip
		\noindent
		\textbf{Case 2.} Each machine in $M_{i,l} \backslash M'_{i,l}$ has load no greater than $Avg(i,l)+1/2$.
		Observe that the total load of $M_{i,l}$ is $k \cdot Ave(i,l) + k/2$ after the allocation of $\CJ{}_i$.
		Since $M_{i,l} \backslash M'_{i,l}$ has load at most $k/2 \cdot (Avg(i,l)+1/2)$, the average load of the $f/2$ machines in $M'_{i,l}$ must be at least
		\begin{equation*}
		Avg'(i,l) \ge \frac{ k \cdot Avg(i,l) + \frac{k}{2} - \frac{k}{2}(Avg(i,l)+	\frac{1}{2})}{k/2} = Avg(i,l)+\frac{1}{2} \,. 
		\qedhere
		\end{equation*}			
	\end{proof}
	
	By applying Lemma~\ref{claim.LB} with $i=u$, we have that the average load of $\CM{}_{u}$ is at least $\frac{{u} - 1}{2}$ before the allocation of jobs $\CJ{}_{u}$.
	Thus, after jobs in $\CJ{}_{u}$ are allocated, the average load of $\CM{}_{u}$ is at least $\frac{{u} -1}{2} +1 = \frac{{u} + 1}{2} $, i.e.,~the online cost is at least $\frac{{u} + 1}{2}$.
	
	An optimal cost of $1$ can be achieved by using the machines in $M_{i,l}  \backslash M'_{i,l}$ for $\CJ{}_i$ (for $1 \le i \le {u} - 1$ and $1 \le l \le m_i'$) and using machines in $\CM{}_{u}$ for $\CJ{}_{u}$. Therefore, the competitive ratio is at least $\frac{{u} + 1}{2} =  \frac{1}{2}\log_2 \frac{m}{f} +1 $. Finally, if $\frac{m}{f}$ is not a power of $2$, we simply apply the previous construction to a subset of $m^*$ machines, 
	for $\frac{m^*}{f}= 2^{\lfloor u \rfloor-1 }$ and $u = 1 + \log_2 \frac{m}{f}$. This gives the desired lower bound $\frac{\lfloor u \rfloor + 1}{2} = \frac{1}{2}\lfloor \log_2 \frac{m}{f} \rfloor +1$. 
\end{proof}

\subsection{General upper bound (Algorithm Assign-U is optimal)}  
\label{sub:a_best_possiable_alg}
In this section, we prove a matching  upper bound  for $f$-favorite machines. 
Specifically, we show that the optimal online algorithm for unrelated machines (algorithm \AU{} by \cite{aspnes1997line} described below) yields an optimal upper bound:

\begin{theorem}
\label{thm:AU}
	Algorithm \AU{} can be used to achieve $\mathcal O (\log \frac{m}{f})$ competitive ratio.
\end{theorem}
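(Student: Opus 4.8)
The plan is to adapt the analysis of \AU{} for unrelated machines \citep{aspnes1997line} to exploit the structure of $f$-favorite machines. Recall that \AU{} works by doubling guesses: it maintains a guess $\Lambda$ for the optimal makespan, and assigns each incoming job $j$ to the machine $i$ minimizing the \emph{exponential potential increase}, i.e., using costs of the form $a^{(\load[i]{j-1}+\p[i]{j})/\Lambda} - a^{\load[i]{j-1}/\Lambda}$ for a suitable constant $a$; if no machine can accept $j$ without the potential exceeding a threshold, the guess $\Lambda$ is doubled and the process restarts (keeping the current loads). The standard argument shows that, as long as $\Lambda \ge C_\opt$, every job can be placed while keeping the potential $\Phi = \sum_i a^{\load[i]{j}/\Lambda}$ bounded by roughly $m$ times a constant, which forces every machine load to stay $O(\Lambda \log m)$; since $\Lambda$ never exceeds $2 C_\opt$, this gives the $O(\log m)$ bound.

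First I would isolate the single place where the factor $m$ (rather than $m/f$) enters: it is the bound $\Phi \le c\cdot m$ used to conclude that each individual exponent is at most $\log_a(cm)$. The key observation is that in the $f$-favorite model, when a job $j$ arrives, it has at least $f$ favorite machines all with processing time $\p{j} \le C_\opt$. So instead of comparing the chosen machine against \opt's single target machine, I would compare the potential increase \AU{} actually pays against the \emph{average} potential increase over the $f$ favorite machines of $j$ — or, more cleanly, against what \opt pays. The crucial point is that \opt distributes the total work $\sum_j \p{j}$ over $m$ machines with each machine load $\le C_\opt$, but more importantly, by Lemma~\ref{lem1}-type reasoning the ``effective'' number of machines that matter is $m/f$: any assignment (including \AU's) can be charged so that the relevant potential lives on $\Theta(m/f)$ machine-slots. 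Concretely, I would show that while $\Lambda \ge C_\opt$, the greedy exponential rule guarantees $\sum_i a^{\load[i]{j}/\Lambda} \le (a-1)\cdot \frac{m}{f} + (\text{const})$, by using that each job's contribution to $\Phi$ can be absorbed by $f$ favorite machines simultaneously (a job of size $\p{j}$ adds at most $\p{j}/\Lambda$ worth of ``exponent budget'' spread over $f$ candidate machines, and \opt's own schedule certifies that $\sum_j \p{j}/f \le m/f \cdot C_\opt \le m/f \cdot \Lambda$ is the right normalization).

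Then I would run the standard doubling wrapper: across all phases with guesses $\Lambda = \Lambda_0, 2\Lambda_0, 4\Lambda_0, \dots$, the final guess satisfies $\Lambda \le 2 C_\opt$, and summing the geometric series of loads accumulated in each phase shows the total load on any machine is $O(\Lambda \log \frac{m}{f}) = O(C_\opt \log \frac{m}{f})$. The bookkeeping for why doubling only costs a constant factor (not a $\log$ factor) is identical to \citet{aspnes1997line} and I would cite it rather than reproduce it.

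The main obstacle I anticipate is making the ``$m \to m/f$'' substitution fully rigorous inside the potential argument: one must verify that the exponential-greedy choice over \emph{all} $m$ machines (not just favorites) still yields the improved potential bound, i.e., that \AU{} never does worse than if it were restricted to the $f$ favorites. This should follow because \AU{} minimizes the potential increase, so it does at least as well as any fixed favorite machine; but one must be careful that the resulting bound on $\Phi$ is in terms of how \opt packs the jobs, and \opt may use non-favorite machines too. The cleanest route is probably to bound $\Phi$ against a hybrid quantity — the potential of the schedule that mimics \opt but where each job is moved to one of its favorite machines (legal since favorites have the minimum processing time, so this only decreases loads) — and observe that such a schedule has all its mass on at most $m$ machines with total work $\sum_j \p{j}$, whose per-favorite-machine average is what drives the $\log(m/f)$ term via the concavity/telescoping step. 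Getting the constants and the exact form of the potential threshold to line up with the lower bound of Theorem~\ref{th:k-favorite-LB} is where most of the care will be needed, but asymptotically the $\Theta(\log\frac{m}{f})$ optimality will follow.
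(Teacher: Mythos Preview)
Your overall framework (exponential potential plus doubling) is right, but the step where you replace $m$ by $m/f$ has a genuine gap. You claim that while $\Lambda\ge C_\opt$ the greedy rule forces $\sum_i a^{\load[i]{j}/\Lambda}\le (a-1)\cdot\frac{m}{f}+\text{const}$. This cannot hold with a useful constant: at time $0$ the sum equals exactly $m$, and the standard Aspnes--Azar comparison (whether against \opt's machine, against the average of the $f$ favorites, or against any ``hybrid'' schedule that puts every job on a favorite) only shows the potential is non-increasing or bounded by a constant times $\sum_i a^{\toload[i]{j}}\le a\cdot m$. Any favorite-only reference schedule still lives on $m$ machines with loads at most $\Lambda$, so its exponential sum is still $\Theta(m)$, not $\Theta(m/f)$. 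The ``spread each job's exponent budget over $f$ favorites'' idea does not change this, because the potential tracks \emph{machine} loads, not per-job charges, and every machine contributes at least $a^0=1$.

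The paper's argument does \emph{not} try to push the potential below $O(m)$. It keeps the unrelated-machines bound $\sum_i a^{\tload[i]{n-1}}\le \frac{\gamma}{\gamma-1}\,m$ verbatim, and the $f$ enters in two separate, later steps. First, since at least $f$ terms in that sum are $\ge a^{\tsload[f]{n-1}}$, one gets $f\cdot a^{\tsload[f]{n-1}}\le \frac{\gamma}{\gamma-1}\,m$, i.e.\ the $f$-th largest load satisfies $\sload[f]{n-1}\le \log_a\!\big(\frac{\gamma}{\gamma-1}\cdot\frac{m}{f}\big)\cdot C_\opt$. Second, one shows $C_{\AU}\le \sload[f]{n-1}+\p{n}$: the last job $n$ has $f$ favorite machines, so some favorite $i''$ has load $\le\sload[f]{n-1}$, and because \AU minimizes the exponential increment it does at least as well as placing $n$ on $i''$. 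You already have this last observation in your proposal, but you are using it in the wrong place --- it is what bounds the \emph{makespan} by the $f$-th largest load plus $\p{n}$, not what shrinks the potential.
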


Following \cite{aspnes1997line}, we show that the  algorithm has an optimal competitive ratio $\rho$ (Lemma~\ref{lem.AU}) for the case the optimal cost $C_\opt{}(\CJ{})$ is \emph{known}, and then apply a standard \emph{doubling} approach to get an algorithm with competitive ratio $4\rho$ for the case the optimum is not known in advance (Theorem~\ref{thm:AU}).
In the following, we use a ``tilde'' notation to denote a normalization by the optimal cost $C_\opt{}(\CJ{})$, i.e., $\tilde{x}=x/C_\opt{}(\CJ{})$.
\block{Algorithm \AU{} \citep{aspnes1997line}: Each job $j$ is assigned to machine $i^*$,
	where $i^*$ is the index minimizing $\Delta_i = a^{\tload[i]{j-1}+ \tp[i]{j}} - a^{\tload[i]{j-1}}$, where $a>1$ is a suitable constant.}
\begin{lemma}\label{lem.AU}
	If the optimal cost is known, \AU{} has a competitive ratio of $\mathcal O (\log \frac{m}{f})$.
\end{lemma}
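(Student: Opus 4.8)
The plan is to adapt the exponential–potential argument of \citet{aspnes1997line} and to squeeze an extra factor $f$ out of the favorite–machine structure. Normalizing so that $C_\opt(\CJ)=1$, every machine load in the optimal schedule (and hence $\tp[o(j)]{j}$ for the optimum's machine $o(j)$) is at most $1$; since $C_\opt$ is assumed known, I would have \AU{} disregard, for each arriving job $j$, every machine $i$ with $\tp[i]{j}>1$ (at least one machine always remains, namely $o(j)$). Fix a constant $a\in(1,2)$, say $a=3/2$, and track the potential $\Phi_j:=\sum_{i\in\CM}a^{\tload[i]{j}}$, so $\Phi_0=m$. The argument then splits into (i) an \emph{accumulation bound} $\Phi_n\le\frac{m}{2-a}$, which is essentially the original one and is insensitive to the favorite structure, and (ii) a \emph{concentration bound} $\Phi_n\ge f\cdot a^{L-1}$, where $L:=\max_i\tload[i]{n}$ is \AU{}'s makespan divided by $C_\opt$; dividing one bound by the other gives $a^{L-1}\le\frac{m}{f(2-a)}$, i.e.\ $L=\mathcal O(\log\frac{m}{f})$, which is the claim.

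For (i), if $i^*$ is \AU{}'s choice for job $j$, then $\Phi_j-\Phi_{j-1}=\Delta_{i^*}\le\Delta_{o(j)}=a^{\tload[o(j)]{j-1}}\bigl(a^{\tp[o(j)]{j}}-1\bigr)\le(a-1)\,\tp[o(j)]{j}\,a^{\tload[o(j)]{j-1}}$, using convexity of $x\mapsto a^x$ on $[0,1]$ together with $\tp[o(j)]{j}\le 1$. Summing over $j$, regrouping the terms by the optimum's machine, bounding $a^{\tload[i]{j-1}}\le a^{\tload[i]{n}}$, and using that the optimum places load at most $1$ on each machine, I obtain $\Phi_n-\Phi_0\le(a-1)\Phi_n$, hence $\Phi_n\le\frac{m}{2-a}$.

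For (ii), by the standard reduction I may assume the instance ends with the job $n$ that \AU{} assigns to a maximum-load machine $i_{\max}$ (ignoring later jobs does not change \AU{}'s decisions on this prefix and does not increase $C_\opt$), so $\tload[i_{\max}]{n-1}=L-\tp[i_{\max}]{n}\ge L-1$ — and it is precisely the discarding of machines with $\tp[i_{\max}]{n}>1$ that keeps this inequality valid even when $i_{\max}\notin\favorite_n$. For every favorite machine $i\in\favorite_n$ we have $\tp[i]{n}=\p{n}/C_\opt\le\tp[i_{\max}]{n}$, so $i$ was an admissible choice, and \AU{}'s preference for $i_{\max}$ over $i$ gives $a^{\tload[i_{\max}]{n-1}}\bigl(a^{\tp[i_{\max}]{n}}-1\bigr)\le a^{\tload[i]{n-1}}\bigl(a^{\tp[i]{n}}-1\bigr)\le a^{\tload[i]{n-1}}\bigl(a^{\tp[i_{\max}]{n}}-1\bigr)$; cancelling the positive factor $a^{\tp[i_{\max}]{n}}-1$ forces $\tload[i]{n-1}\ge\tload[i_{\max}]{n-1}\ge L-1$. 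Thus all $|\favorite_n|\ge f$ favorite machines of $n$ carry load at least $L-1$ in \AU{}'s final schedule, so $\Phi_n\ge f\cdot a^{L-1}$, which combined with (i) yields the stated bound.

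I expect step (ii) to be the crux: the point is that a large online load is ``witnessed'' simultaneously on the $\ge f$ favorite machines of the job that creates it, and this is exactly where the improvement from $\log m$ down to $\log\frac{m}{f}$ comes from — everything in (i) is just the usual bookkeeping of \citet{aspnes1997line}. The subtle prerequisite is the modification that ignores machines on which the current job is too slow, without which $\tload[i_{\max}]{n-1}\ge L-1$ can fail. Finally, the assumption that $C_\opt$ is known is removed in Theorem~\ref{thm:AU} by the standard doubling trick, at the cost of an extra constant factor.
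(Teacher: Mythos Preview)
Your proof is correct and reaches the same $\mathcal O(\log\frac{m}{f})$ bound, but it follows a genuinely different route from the paper's argument. The paper uses the \emph{weighted} potential $\Phi(j)=\sum_i a^{\tload[i]{j}}(\gamma-\toload[i]{j})$ of \citet{aspnes1997line} and shows it is \emph{monotone non-increasing}; this directly yields $\gamma m\ge (\gamma-1)\,f\cdot a^{\tsload[f]{n-1}}$, i.e.\ a bound on the $f^{\text{th}}$ largest online load, and then a separate claim shows $C_\AU\le \sload[f]{n-1}+\p{n}$ by a two-case analysis (good vs.\ bad last job). You instead use the \emph{unweighted} potential $\sum_i a^{\tload[i]{j}}$, replace monotonicity by a telescoping-and-regrouping inequality $\Phi_n-\Phi_0\le (a-1)\Phi_n$, and obtain concentration by proving that \emph{every} machine in $\favorite_n$ has load at least $L-1$ through the $\Delta$-comparison. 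Your potential is simpler and the accumulation step is arguably more transparent; the paper's potential buys a clean one-line monotonicity and avoids any algorithmic modification.

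On that last point: the modification ``discard machines with $\tp[i]{j}>1$'' is not actually needed. From $\Delta_{i_{\max}}\le\Delta_i$ together with your own inequality $\tload[i]{n-1}\ge\tload[i_{\max}]{n-1}$ one gets $a^{\tload[i_{\max}]{n-1}+\tp[i_{\max}]{n}}\le a^{\tload[i]{n-1}+\tp[i]{n}}$, hence $L=\tload[i_{\max}]{n-1}+\tp[i_{\max}]{n}\le \tload[i]{n-1}+\tp[i]{n}\le \tload[i]{n-1}+1$ for every $i\in\favorite_n$, which gives $\tload[i]{n-1}\ge L-1$ directly (this is essentially how the paper's Claim~\ref{claim.AU} handles the bad-job case). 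So you can drop the modification and prove the lemma for \AU{} exactly as stated; with the modification you are, strictly speaking, analyzing a slight variant, though one that is equally usable in Theorem~\ref{thm:AU}.
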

\begin{proof}
	Without loss of generality, we assume that the makespan of the allocation of \AU{} is determined by the  last job $n$ (this is the same as the proof of Theorem~\ref{thm.greedyUB} above).
	Let $\oload[i]{j}$ be the load of machine $i$ in the optimal schedule after the first $j$ jobs are assigned.
	Define the potential function:
	\begin{equation}\label{eq.potential_function}
		\Phi(j) = \sum_{i=1}^m a^{\tload[i]{j}} \left(\gamma - \toload[i]{j}\right) \, ,
	\end{equation}
	where $a,\gamma > 1$ are constants.
	Similarly to the proof in \cite{aspnes1997line}, we have that the potential function \eqref{eq.potential_function} is non-increasing for  $a = 1+ 1/ \gamma$ (see Appendix~\ref{sub:non-increasing_potential_function} for details).
	Since $\Phi(0) \ge \Phi(n-1)$ and $\toload[i]{n-1}\le 1$, we have
	\begin{equation*}\label{eq.AUpot}
	\gamma \cdot m  \ge  \sum_{i=1}^m a^{\tload[i]{n-1}} \left(\gamma - \toload[i]{n-1}\right) 
	\ge \sum_{i=1}^m a^{\tload[i]{n-1}} (\gamma - 1) 
	\ge f\cdot a^{\tsload[f]{n-1}}(\gamma - 1) \,.
	\end{equation*}
	Thus, it follows that
	\begin{equation}\label{eq.logmk}
	\sload[f]{n-1} \le \log_a \left(\frac{\gamma}{\gamma - 1} \cdot \frac{m}{f} \right) \cdot C_\opt{} \,.
	\end{equation}
	\begin{claim}\label{claim.AU}
		$C_\AU{} \le \sload[f]{n-1} + \p{n}$\,.
	\end{claim}
	\begin{proof}
		Suppose \AU{} assigns the last job $n$ to machine $i'$, it holds that
		\[
		C_\AU{} = \load[i']{n-1} + \p[i']{n} \,,
		\] 
		since we suppose job $n$ is the last completed job.

		\smallskip
		\noindent
		\textbf{Case 1.} Job $n$ is allocated as a good job, i.e., $\p[i']{n} = \p{n}$.
		According to the rule of \AU{}, machine $i'$ must be one of the machines in $\favorite_n$ that has the minimum load.
		Since there are at least $f$ favorite machines, it holds that $\load[i']{n-1} \le \sload[f]{n-1}$. 
		Thus, $C_\AU{} \le \sload[f]{n-1} + \p{n}$.

		\smallskip
		\noindent
		\textbf{Case 2.} Job $n$ is allocated as a bad job, i.e., $\p[i']{n} > \p{n}$.
		Let $i''$ be the machine who has the minimum load in $\favorite_n$. Note that $\load[i'']{n-1} \le \sload[f]{n-1}$.
			According to the rule of \AU{}, it holds that $\load[i']{n-1} \le \load[i'']{n-1}$; otherwise,
			we have $\Delta_{i'} \ge \Delta_{i''}$, which is contradictory to that \AU{} assigns job $n$ to machine $i'$.
			
			Since $\Delta_{i'} \le \Delta_{i''}$ and $\load[i']{n-1} \le \load[i'']{n-1} \le \sload[f]{n-1}$, we have
			\begin{equation*}
			a^{\frac{\load[i']{n-1}+\p[i']{n}}{C_\opt{}}}-a^{\frac{\load[i']{n-1}}{C_\opt{}}} 
			\le a^{\frac{\load[i'']{n-1}+ \p[i'']{n}}{C_\opt{}}}-a^{\frac{\load[i'']{n-1}}{C_\opt{}}} \,,
			\end{equation*}
			so that ${\frac{\load[i']{n-1}+\p[i']{n}}{C_\opt{}}}
			\le {\frac{\load[i'']{n-1}+ \p[i'']{n}}{C_\opt{}}} 
			\le {\frac{\sload[f]{n-1}+ \p[i'']{n}}{C_\opt{}}} \,,$
			implying
			\[
			C_\AU{} = \load[i']{n-1}+\p[i']{n} \le \sload[f]{n-1}+ \p[i'']{n} = \sload[f]{n-1}+ \p{n} \,. \qedhere
			\]
	\end{proof}
	According to Claim~\ref{claim.AU} and \eqref{eq.logmk}, we have
	\begin{align*}
	\frac{C_\AU{}}{C_\opt{}} & \le \frac{\sload[f]{n-1} + \p{n}}{C_\opt{}} 
	\le \log_a \Big(\frac{\gamma}{\gamma - 1} \cdot \frac{m}{f} \Big)  + \frac{\p{n}}{C_\opt{}} \nonumber  \le \log_a \Big(\frac{\gamma}{\gamma - 1} \cdot \frac{m}{f} \Big) + 1 \ .  
	\end{align*}
	Since the latter quantity is $\mathcal O(\log \frac{m}{f})$, this proves Lemma~\ref{lem.AU}. 
\end{proof}

By using the standard  \emph{doubling} approach, Lemma~\ref{lem.AU} implies Theorem~\ref{thm:AU} (see Appendix~\ref{sub:doubling_approach}).

\section{The symmetric favorite machine model}
\label{sec:two_classes_of_machines_case}
This section focuses on the \emph{symmetric} favorite machines problem,
where $\CM{}=\{ M_1,M_2 \}$ ($|M_1| = |M_2| = f$), $\favorite_j \in \{ M_1,M_2 \}$ for each job $j$, and the processing time of job $j$ on a non-favorite machine is $s$ ($\geq 1$) times that on a favorite machine. 
We analyze the competitive ratio of  \GR{} as a function of the parameter $s$. Though \GR has a constant competitive ratio for this problem, another natural algorithm called \GF{} performs better for larger $s$.
At last, a combination of the two algorithms, \GGF{}, obtains a better competitive ratio, and the algorithm is optimal for the two machines case.

\subsection{Greedy Algorithm} 
The next two theorems regard the competitive ratio of the \GR{} algorithm for the symmetric favorite machines case.
\label{sub:greedy4two}
\begin{theorem}
	\label{thm:greedy4two}
	For the \symmetric{$\frac{m}{2}$} case, the \GR{} algorithm has a competitive ratio at most
	\begin{equation*}
	\hat \rho_{\GR{}} :=  \min\left\{1+\left(2- \frac{1}{f}\right) \frac{s^2}{s+1},~s+\left(2- \frac{1}{f}\right)\frac{s}{s+1},~3- \frac{1}{f}\right\} \,,
	\end{equation*}
	where $f= \frac{m}{2}$ and $m$ is the number of machines.
\end{theorem}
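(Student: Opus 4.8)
The plan is to follow the same template as the proof of Theorem~\ref{thm.greedyUB}: assume WLOG that the makespan of \GR{} is realized by the last job $n$ (discarding later jobs changes neither $C_\GR$ nor $C_\opt$), bound $C_\GR$ from above in terms of the loads just before $n$ is placed, bound $C_\opt$ from below, and take the ratio. What makes the symmetric case amenable to a sharper bound is that $f = m/2$, so Lemma~\ref{lem1} specializes to $\sload[1]{n-1} + \cdots + \sload[f]{n-1} \le \sum_{j<n}\p{j}$ with exactly half the machines on the left. I would name the key quantities $L := \sload[f]{n-1}$ (the $f$-th largest load before $n$, which upper-bounds the least loaded favorite machine of $n$) and $P := \p{n}$, and also keep track of whether job $n$ is placed as a good job or a bad job, since the processing-time penalty $s$ enters only through bad placements. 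As in \eqref{eq.Cgreedy} we get $C_\GR \le L + P$ if $n$ is good, and if $n$ is bad on machine $b$ then greedy's choice forces $\load[b]{n-1} + sP \le \load[a]{n-1} + P \le L + P$ for the favorite machine $a$ attaining the $f$-th load, hence $\load[b]{n-1} \le L + P - sP$ and $C_\GR = \load[b]{n-1} + sP \le L + P$ again; so uniformly $C_\GR \le L + P$.

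For the lower bound on $C_\opt$ I would extract three inequalities and take their max, one per term of $\hat\rho_{\GR}$. First, the volume bound: using Lemma~\ref{lem1} on the first $n-1$ jobs, $\sum_{j\le n}\p{j} \ge fL + P$, and since the optimum distributes total work $\ge \sum_{j\le n}\p{j}$ over $m = 2f$ machines (even after paying at most the minimal times), $C_\opt \ge (fL+P)/(2f)$. Combined with $C_\GR \le L+P$ this already yields a bound of the form $3 - 1/f$ after optimizing over $L/P$, reproducing the third term. Second, the trivial bound $C_\opt \ge P$. Third — and this is the term responsible for the $s^2/(s+1)$ and $s/(s+1)$ expressions — I would argue about how much of the load $L$ the optimum can avoid. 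Roughly: the jobs contributing to the $f$ heavy machines in greedy's schedule have total minimal processing time at least (something like) $fL/(\text{penalty})$; but the optimum, to beat greedy, must still place those jobs, and on the favorite side it pays $\p{j}$ while greedy on heavy machines may have paid up to $s\p{j}$. Carefully, one shows that the optimal makespan is at least $L/s$ when the heavy greedy machines are full of bad jobs (each greedy unit of load corresponds to $1/s$ units the optimum must absorb), or more precisely a convex combination yielding $C_\opt \ge \tfrac{L}{s} \cdot \tfrac{s}{s+1} \cdot (\text{correction})$. Substituting each of these three lower bounds into $C_\GR/C_\opt \le (L+P)/C_\opt$ and, in each case, finding the worst-case ratio $x := L/P \ge 0$ (the first term decreasing in $x$, the $P$-term increasing, so the max is minimized at the crossing point, exactly as at the end of the proof of Theorem~\ref{thm.greedyUB}) gives the three expressions; the claimed bound is their minimum because all three inequalities hold simultaneously.

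The main obstacle is the third lower bound on $C_\opt$ — the one encoding the speed ratio $s$. The volume bound and $C_\opt \ge P$ are immediate, but quantifying "the optimum cannot be too much cheaper than $L$ on the heavy side" requires a structural argument about which jobs landed on the $f$ heaviest greedy machines and what they cost on their favorite side. I expect this needs a refinement of Lemma~\ref{lem1}: not just that the top-$f$ loads sum to at most $\sum\p{j}$, but a version tracking bad-job load separately, e.g. that the total bad-job load on the heavy machines, divided by $s$, is a lower bound on work the optimum must place on the light side — or symmetrically, that $\sload[1]{n-1}+\cdots+\sload[f]{n-1}$ is at most $s$ times the optimal makespan plus the good-job contribution. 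Once such a "weighted" invariant is in hand, the three-way case analysis on $x = \sload[f]{n-1}/\p{n}$ is routine optimization of piecewise-rational functions, analogous to the last display in the proof of Theorem~\ref{thm.greedyUB}, and the $2 - 1/f$ coefficients come out of the Phase-2-style identical-machines argument (load on $f$ identical machines is at most average plus $(1-1/f)$ times the last job).
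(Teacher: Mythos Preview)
Your template (WLOG last job $n$, upper-bound $C_\GR$, lower-bound $C_\opt$, optimize the ratio) is correct, and the $3-\frac{1}{f}$ term follows just as you say. The gap is in how you aggregate the loads: working with $L=\sload[f]{n-1}$ discards the two-group structure and is too coarse to yield the $s$-dependent terms. The paper does not use $\sload[f]{n-1}$ here. With $M_1$ the favorite group of job $n$, it instead sets $l_\alpha=\min_{i\in M_1}\load[i]{n-1}$ and $l_\beta=\min_{i\in M_2}\load[i]{n-1}$, obtains $C_\GR\le\min\{l_\alpha+\p{n},\,l_\beta+s\p{n}\}$, and splits into two cases according to which term realizes the minimum. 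The relation this gives between $l_\alpha$ and $l_\beta$ (e.g.\ $l_\beta\ge l_\alpha-(s-1)\p{n}$ in Case~1) is precisely what injects the parameter $s$ into the ratio.

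The lower bound on $C_\opt$ is also not a refinement of Lemma~\ref{lem1} tracking bad-job load; it is a direct fractional-balancing argument across the two groups. Let $P_\alpha=\sum_{j:\favorite_j=M_1}\p{j}$ and $P_\beta=\sum_{j:\favorite_j=M_2}\p{j}$; the balanced fractional schedule gives $C_\opt\ge\frac{sP_\alpha+P_\beta}{f(s+1)}$ when $P_\alpha\ge P_\beta$ (and symmetrically otherwise). Writing $P_\alpha,P_\beta$ in terms of the greedy loads and their good-job portions (e.g.\ $P_\alpha=L_\alpha^\good+\tfrac{1}{s}(L_\beta-L_\beta^\good)+\p{n}$), and using only $L_\alpha^\good,L_\beta^\good\ge 0$ together with the sign of $P_\alpha-P_\beta$, collapses this to a bound in $l_\alpha,l_\beta,\p{n}$ alone. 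The $(2-\frac{1}{f})$ coefficients are not imported from an identical-machines Phase-2 argument; they fall out of the final optimization in $x=l_\alpha/\p{n}$ (one term increasing, two decreasing), exactly as in the last display of the proof of Theorem~\ref{thm.greedyUB}.
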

Since this upper bound increases in $s$ ($s\geq  1$), we  have the following inequalities:
\[
2 - \frac{1}{m} \leq \hat \rho_{\GR{}} \leq 3 - \frac{2}{m} \, ,
\]
and both bounds can be achieved by the actual competitive ratio $\rho_\GR{}$ of algorithm $\GR{}$. Note that for $s=1$, the lower bound is indeed tight as it corresponds to the analysis of \GR{} on $m$ identical machines \citep{graham1966bounds}. More generally, we have the following result (whose proof is deferred to Appendix~\ref{sub.proof_of_theorem_LB4two} for readability sake): 

\begin{theorem}
	\label{thm:TI}
	The upper bound in Theorem~\ref{thm:greedy4two} is tight ($\rho_\GR{}=\hat\rho_\GR{}$) in each of the following cases:
	\begin{enumerate}
		\item For $f = 1$, $\rho_{\GR{}}= \min\{1+ \frac{s^2}{s+1},~2\}$;
		\item For $f = 2$ and $1 \le s \le 1.605$, $\rho_{\GR{}}=1+ \frac{3s^2}{2(s+1)}$;
		\item For $3 \le f \le \frac{s}{s-1}$ (implying $1 \le s\le 1.5$), $\rho_{\GR{}}=1+(2- \frac{1}{f}) \frac{s^2}{s+1}$;
		\item For $f > \frac{s}{s-1}$ and $1 \le s \le \frac{1+ \sqrt{5}}{2}$,  $\rho_{\GR{}}=s+(2- \frac{1}{f}) \frac{s}{s+1}$;
		\item For $2 \le f < s$, $\rho_{\GR{}}=3- \frac{1}{f}$.
	\end{enumerate}
\end{theorem}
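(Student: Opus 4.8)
The plan is to prove Theorem~\ref{thm:TI} on the lower-bound side only: Theorem~\ref{thm:greedy4two} already gives $\rho_{\GR}\le\hat\rho_{\GR}$, so for each of the five cases it suffices to exhibit a family of job sequences on which \GR{} attains a makespan arbitrarily close to the stated value while the optimum has makespan $1$. Inside the $\min$ defining $\hat\rho_{\GR}$ the three terms $A:=1+(2-\tfrac1f)\tfrac{s^2}{s+1}$, $B:=s+(2-\tfrac1f)\tfrac{s}{s+1}$ and $C:=3-\tfrac1f$ are each active (smallest) in a different region of the $(f,s)$-plane, and the five listed cases pin down sub-regions where a single term is active; the five instances therefore share a skeleton and differ only in the sizes of the jobs released.

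The skeleton mirrors the lower bound of Theorem~\ref{thm.greedyLB} with $m'=m/f=2$ groups: a \emph{loading phase} releases carefully sized batches that \GR{} is forced to place either as good jobs (roughly one per machine of a group) or, typically after a tie, as bad jobs on the other group; the essential point is that a job \GR{} places badly on a machine of $M_1$ has $M_2$ as its favourite, so the optimum can move it back to $M_2$ at the small processing time and --- because each group has $f$ machines --- spread these returned jobs evenly so every machine stays at load at most $1$. This is the slack \GR{} fails to exploit. A subsequent \emph{saturation phase} releases a Graham-type instance ($f(f-1)$ equal small jobs followed by one larger job) that drives the maximum load of one group up by the identical-machines factor $2-\tfrac1f$ (scaled by the per-job size chosen in the loading phase), ending on a single machine at exactly $\hat\rho_{\GR}$; for this to work one must verify that \GR{} keeps every saturation job on the intended group.

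Case~5 ($2\le f<s$), where $C$ is active, is essentially the specialization of the proof of Theorem~\ref{thm.greedyLB} to two groups: one round of bad jobs puts a base load $1$ on every machine of $M_2$, and a Graham phase on $M_2$ with unit size $\tfrac1f$ adds $2-\tfrac1f$. The one new check is that the ``$s$ huge'' hypothesis used there can be weakened to $f<s$, which uses $f\ge2$: a $\tfrac1f$-job favouring $M_2$ stays good once $s(s-1)>f$, and $s>f\ge2$ gives $s(s-1)>f(f-1)\ge f$, while the final larger job stays good because $s-\tfrac1s>f-\tfrac1f\ge2-\tfrac1f$. In cases~2,~3 and~4 the parameter $s$ is close to $1$ ($s\le\tfrac{f}{f-1}$ in case~3; $s\le\tfrac{1+\sqrt5}{2}$ in cases~3 and~4), so the cheap bad jobs --- one of favourite size $\tfrac{s}{s+1}$ or $\tfrac1{s+1}$ costs $\tfrac{s^2}{s+1}$, respectively $\tfrac{s}{s+1}$, on a non-favourite --- are what drive the bound, and the loading phase has to be \emph{interleaved}: one alternates infinitesimal batches raising one group with infinitesimal batches that \GR{} bounces onto the other so that, when the saturation phase begins, the two groups sit at precisely the relative loads that make \GR{} perform the intended assignments while keeping the optimum at $1$. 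The exact job sizes are then obtained by solving the linear constraints ``\GR's makespan equals $A$ (or $B$)'' and ``the optimum equals $1$'' --- the bookkeeping deferred to the appendix.

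The main obstacle is controlling \GR's decisions uniformly over the \emph{whole} claimed range of $s$: every batch must induce the intended greedy choice (good versus bad, and on which machine) for all admissible $s$ at once, and the two groups genuinely interact during the saturation phase once $f>s$ --- a clean Graham-on-$f$-machines keeps its jobs on one group only when $f\le s$, which is exactly the regime of case~5 --- so for cases~3 and~4 the last few jobs must be arranged so the groups evolve together while one overshoots by the right amount. The extremes $f=1$ and $f=2$ need separate treatment. For $f=1$ there are only two machines and no spreading slack, so the instance cannot follow the Graham skeleton and is instead a direct ping-pong between the two machines, using that the two job types involved have opposite favourites so the optimum can offload each machine onto the other. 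For $f=2$ the two-machine Graham sub-instance interacts with the loading so that, when the optimum is balanced against \GR, tightness persists only up to a threshold $s\approx1.605$ arising from the auxiliary equation in that balancing, which is why case~2 is stated with the restricted range rather than the interval $1\le s\le2$ that $f\le\tfrac{s}{s-1}$ would allow for $f=2$.
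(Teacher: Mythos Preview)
Your treatment of Cases~1 and~5 is essentially the paper's: Case~1 is a short three-job construction on two machines, and Case~5 is the two-group specialization of the Theorem~\ref{thm.greedyLB} instance, with the check that $f<s$ (together with $f\ge 2$) suffices to keep the saturation jobs on their favourite group.

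For Cases~2--4 your skeleton diverges from the paper, and the divergence is where the difficulty lies. You propose an interleaved infinitesimal loading phase followed by a Graham-type saturation phase contributing the factor $2-\tfrac1f$. The paper does \emph{not} use a Graham saturation in these cases. It (i) uses exactly your interleaved-$\epsilon$ idea (stated as a lemma) to preload \emph{all $2f$ machines} to a common level with arbitrarily redistributable bad jobs; then (ii) releases a finite geometric ladder of batches $f\times((s-1)^i,\cdot)$ with alternating favourite groups, which \GR{} is forced to bounce back and forth so that afterwards the two groups differ in load by exactly $s-1$; and finally (iii) releases a \emph{single} job $(1,M_1)$, which \GR{} places badly. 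The whole $(2-\tfrac1f)$-contribution is absorbed into the preload level chosen in step~(i), not generated by a saturation sub-instance; the range constraints in Cases~2--4 (e.g.\ $s\le 1.605$ for $f=2$) come from requiring that preload level to be nonnegative.

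This matters because the obstacle you yourself flag --- a clean Graham phase on one group stays there only when $f\le s$, which fails throughout Cases~3 and~4 --- is not resolved in your plan. Saying ``the last few jobs must be arranged so the groups evolve together'' and ``sizes obtained by solving linear constraints'' underestimates the problem: at every step \GR{} chooses good versus bad based on an \emph{inequality} on the current load gap, so one needs a mechanism that maintains the right gap step by step across the whole stated $s$-range. The paper's $(s-1)^i$ ladder is precisely such a mechanism (each batch resets the gap to $(s-1)$ times the next job size, forcing the intended tie). Without an analogous device, your proposal for Cases~2--4 has a genuine gap.
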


\subsubsection{Proof of Theorem~\ref{thm:greedy4two}}
	Because this is a special case of the general model, we  have
	$\hat{\rho}_\GR{} \le 3- \frac{1}{f}$ from Theorem~\ref{thm.greedyUB} (by recalling that $m=2f$).
	Thus, we just need to prove
	\[
	\frac{C_\GR{}}{C_\opt{}} \le  \min\Big\{1+\big(2- \frac{1}{f}\big) \frac{s^2}{s+1},~s+\big(2- \frac{1}{f}\big)\frac{s}{s+1}\Big\} \,.
	\]
	
	Without loss of generality, we assume that the makespan of the allocation of \GR{} is determined by the  last job $n$.
	Suppose that the first $n - 1$ jobs have been already allocated by \GR{}, and denote 
	$l_\alpha = \min_{i\in M_1} l_i^{(n-1)}, \, L_\alpha = \sum_{i\in M_1} l_i^{(n-1)}, \, l_\beta = \min_{i\in M_2} l_i^{(n-1)},  \, L_\beta = \sum_{i\in M_2} l_i^{(n-1)}$.
	Also let $L_\alpha^\good$ and $L_\beta^\good$ be the total load of good jobs of $L_\alpha$ and $L_\beta$, respectively.
	Observe that $l_\alpha \le \frac{L_\alpha}{f}$ and $l_\beta \le \frac{L_\beta}{f}$.
	Without loss of generality, we assume that machines in $M_1$ are the favorite machines of the last job $n$.
	
	We first give a lower bound of the optimal cost $C_\opt{}$:
	\begin{claim}\label{cla.opt.LB}
		$C_\opt{}  \ge \max \Big\{ \frac{f \cdot s \cdot l_\alpha + f \cdot l_\beta + s \cdot \p{n}}{f\cdot s^2 + f \cdot s} ,\, \frac{ f \cdot l_\alpha + f \cdot s \cdot l_\beta + s^2 \cdot \p{n}}{f\cdot s^2 + f \cdot s} ,\, \p{n} \Big\}$
	\end{claim}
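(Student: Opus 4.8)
The plan is to bound $C_\opt{}$ from below by three separate arguments, one for each term of the maximum. The bound $C_\opt{} \ge \p{n}$ is immediate, since in any schedule job $n$ occupies at least $\p{n}$ on whichever machine it is assigned to (and $\p{n}$ is its minimum processing time). The remaining two bounds will both come from a single volume argument on the first $n$ jobs, combined in two symmetric ways.

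First I would translate \GR{}'s loads into a statement about total \emph{minimum} processing times. Let $P_1$ (resp.\ $P_2$) be the sum of $\p{j}$ over all jobs $j \in \{1,\dots,n\}$ whose favorite set is $M_1$ (resp.\ $M_2$); note job $n$ contributes $\p{n}$ to $P_1$. Decomposing $L_\alpha$ into its good part $L_\alpha^\good$ (jobs with favorite $M_1$ placed on $M_1$, each contributing $\p{j}$) and its bad part $L_\alpha - L_\alpha^\good$ (jobs with favorite $M_2$ placed on $M_1$, each contributing $s\,\p{j}$), and doing the same for $L_\beta$, I get
\[
P_1 = L_\alpha^\good + \tfrac{1}{s}\bigl(L_\beta - L_\beta^\good\bigr) + \p{n}, \qquad
P_2 = L_\beta^\good + \tfrac{1}{s}\bigl(L_\alpha - L_\alpha^\good\bigr).
\]
Using $s \ge 1$ (so $s - \tfrac{1}{s} \ge 0$) together with $L_\alpha^\good, L_\beta^\good \ge 0$, $L_\alpha \ge f\,l_\alpha$ and $L_\beta \ge f\,l_\beta$, a one-line computation gives $P_1 + sP_2 \ge f\,l_\alpha + \tfrac{f}{s}\,l_\beta + \p{n}$ and $sP_1 + P_2 \ge \tfrac{f}{s}\,l_\alpha + f\,l_\beta + s\,\p{n}$.

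Next I would lower-bound $C_\opt{}$ in terms of $P_1$ and $P_2$. In the optimal schedule the favorite-$M_1$ jobs are split between $M_1$ (cost $\p{j}$ each) and $M_2$ (cost $s\,\p{j}$ each); let $y_1 \in [0,P_1]$ be the total $\p{j}$ of those placed on $M_1$, and define $y_2 \in [0,P_2]$ analogously for the favorite-$M_2$ jobs placed on $M_2$. Then the optimal total load of $M_1$ is $y_1 + s(P_2 - y_2)$ and that of $M_2$ is $y_2 + s(P_1 - y_1)$, and since each group has $f$ machines, $f\,C_\opt{}$ is at least each of these two quantities. Forming the combinations ``$s\cdot(\text{first})+(\text{second})$'' and ``$(\text{first})+s\cdot(\text{second})$'', the free variables $y_1,y_2$ appear with coefficient $1-s^2 \le 0$, so bounding $y_i \le P_i$ eliminates them and leaves $f(s+1)\,C_\opt{} \ge sP_1 + P_2$ and $f(s+1)\,C_\opt{} \ge P_1 + sP_2$. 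Plugging in the inequalities from the previous paragraph and dividing out yields exactly the first two terms of the claimed maximum.

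The main obstacle is purely bookkeeping rather than conceptual: keeping the good/bad decomposition of $L_\alpha$ and $L_\beta$ straight — in particular using that in the symmetric model a bad job on $M_1$ necessarily has $M_2$ as favorite, hence contributes $s$ times its minimum time — and checking that every invocation of $s \ge 1$ orients its inequality the right way. Once the volume identities for $P_1,P_2$ are in place, no step needs more than elementary algebra, and the three bounds combine directly into the stated maximum.
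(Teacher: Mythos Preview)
Your argument is correct. Both you and the paper start from the same identities expressing $P_1,P_2$ (the paper's $P_\alpha,P_\beta$) in terms of the good/bad decomposition of $L_\alpha,L_\beta$, but you diverge in how the optimum is bounded. The paper argues via a \emph{fractional balancing} picture and therefore splits into the cases $P_\alpha \ge P_\beta$ and $P_\alpha < P_\beta$; in each case it obtains only one of the two inequalities $f(s{+}1)\,C_\opt \ge sP_\alpha+P_\beta$ or $f(s{+}1)\,C_\opt \ge P_\alpha+sP_\beta$, and then uses the case hypothesis to squeeze out a lower bound on $L_\alpha^\good$ (resp.\ $L_\beta^\good$) before substituting. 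You instead write the two per-group volume inequalities $f\,C_\opt \ge y_1+s(P_2-y_2)$ and $f\,C_\opt \ge y_2+s(P_1-y_1)$ and take the weighted sums $s{:}1$ and $1{:}s$; in each sum one of $y_1,y_2$ cancels and the other survives with coefficient $1-s^2\le 0$, so bounding it by $P_i$ gives \emph{both} inequalities $f(s{+}1)\,C_\opt \ge sP_1+P_2$ and $f(s{+}1)\,C_\opt \ge P_1+sP_2$ simultaneously, with no case split. Then the direct computations $sP_1+P_2 \ge \tfrac{f}{s}l_\alpha+f\,l_\beta+s\,\p{n}$ and $P_1+sP_2 \ge f\,l_\alpha+\tfrac{f}{s}l_\beta+\p{n}$ (using only $L_\alpha^\good,L_\beta^\good \ge 0$) finish the job. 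Your route is shorter and avoids the case analysis; the paper's route makes the balancing intuition more visible but pays for it in bookkeeping. One nitpick: in each of your two weighted sums only \emph{one} of $y_1,y_2$ actually survives with coefficient $1-s^2$ (the other cancels), so ``the free variables $y_1,y_2$ appear with coefficient $1-s^2$'' is slightly imprecise, though the conclusion is unaffected.
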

	\begin{proof}
		Denote by $P_\alpha$ and $P_\beta$ the total \emph{minimum processing time} of the jobs which have $M_1$ and $M_2$ as their favorite machines, respectively, i.e.,
		\begin{align}
		\label{eq.P1}
		& P_\alpha = \sum_{j :\ \favorite_j=M_1}\p{j} = L_\alpha^\good{} + \frac{1}{s}(L_\beta - L_\beta^\good{}) + \p{n} \, ,\\ 
		\label{eq.P2}
		& P_\beta = \sum_{j:\ \favorite_j=M_2}\p{j} = L_\beta^\good{} + \frac{1}{s}(L_\alpha - L_\alpha^\good{}) \, .
		\end{align}
		
		A lower bound on the optimal cost can be obtained by considering the following \emph{fractional} assignment. First, allocate all jobs as \emph{good jobs}, i.e., assign $P_\alpha$ to $M_1$ and $P_\beta$ to $M_2$.
		Then, reassign a fraction of them to make all machines to have the same load.  
		We next distinguish two cases:

		\smallskip\noindent
		\textbf{Case 1 ($P_\alpha \ge P_\beta$).}
		In this case, the reassignment is to move $\frac{1}{s+1}(P_\alpha -P_\beta)$ of $P_\alpha$ to machines in $M_2$ so that $$P_\alpha - \frac{1}{s+1} (P_\alpha -P_\beta ) = P_\beta + \frac{s}{s+1} (P_\alpha -P_\beta).$$
		Therefore, along with \eqref{eq.P1} and \eqref{eq.P2}, the optimal cost is at least:
		\begin{equation}\label{eq.opt1.l1a}
		C_\opt{} \ge \frac{s \cdot P_\alpha + P_\beta}{f(s+1)}
		= \frac{L_\alpha +s\cdot L_\beta + s^2 \cdot \p{n} + (s^2-1)L_\alpha^\good{} }{f\cdot s^2 + f \cdot s} \,.
		\end{equation}
		Furthermore, substituting \eqref{eq.P1} and \eqref{eq.P2} into $P_\alpha \ge P_\beta$ we have
		\begin{equation*}
		L_\alpha^\good{} - L_\beta^\good{} \ge \frac{1}{s+1}(L_\alpha - L_\beta) - \frac{s}{s+1}\p{n} \, .
		\end{equation*}
		Therefore,
		\begin{equation}\label{eq.l1a1}
		L_\alpha^\good{} \ge \max \left\{ \frac{1}{s+1}(L_\alpha - L_\beta) - \frac{s}{s+1}\p{n} , \, 0  \right\} \,.
		\end{equation}
		Substituting \eqref{eq.l1a1} into \eqref{eq.opt1.l1a}, we have
		\begin{equation*}
		C_\opt{}  \ge \max \left\{ \frac{s \cdot L_\alpha +  L_\beta + s \cdot \p{n}}{f\cdot s^2 + f \cdot s} ,\, \frac{ L_\alpha + s \cdot L_\beta + s^2 \cdot \p{n}}{f\cdot s^2 + f \cdot s} \right\} \,.
		\end{equation*}
		Along with $C_\opt{}  \ge \p{n}$, $L_\alpha \ge f \cdot l_\alpha$ and $L_\beta \ge f \cdot l_\beta$, we obtain the inequation	of this claim.

		\smallskip\noindent
		\textbf{Case 2 ($P_\alpha < P_\beta$).}
		Similarly to the previous case, we have
		\begin{equation}\label{eq.opt2.l2a}
		C_\opt{} \ge \frac{ P_\alpha + s \cdot P_\beta}{f(s+1)}
		= \frac{s \cdot L_\alpha + L_\beta + s \cdot \p{n} + (s^2-1)L_\beta^\good{} }{f\cdot s^2 + f \cdot s}
		\end{equation}
		In this case, \eqref{eq.P1} and \eqref{eq.P2} imply
		\begin{equation}
		\label{eq.l2-l1}
		L_\beta^\good{} \ge \max \left\{  - \frac{1}{s+1}(L_\alpha-L_\beta) + \frac{s}{s+1}\p{n}, \, 0 \right\} \,,
		\end{equation}
		Substituting \eqref{eq.l2-l1} into\eqref{eq.opt2.l2a}, we have
		\begin{equation*}
		C_\opt{}  \ge \max \left\{  \frac{ L_\alpha + s \cdot L_\beta + s^2 \cdot \p{n}}{f\cdot s^2 + f \cdot s}  ,\, \frac{s \cdot L_\alpha +  L_\beta + s \cdot \p{n}}{f\cdot s^2 + f \cdot s}\right\} \,.
		\end{equation*}
		Along with $C_\opt{}  \ge \p{n}$, $L_\alpha \ge f \cdot l_\alpha$ and $L_\beta \ge f \cdot l_\beta$, we obtain the inequation	of this claim and complete the proof.
	\end{proof}
	
	We next consider the cost of the \GR{} algorithm.
	Recall that job $n$ has $M_1$ as favorite machines.
	After job $n$ is allocated, we have 
	\begin{equation}\label{eq.symgreedy}
	C_\GR{} \le \min \{ l_\alpha + \p{n} , \, l_\beta + s \cdot\p{n} \} \, .
	\end{equation}
	Two cases arise depending on the largest between the two quantities in \eqref{eq.symgreedy}:
	
	\smallskip
	\noindent
	\textbf{Case 1 ($l_\alpha + \p{n} \le l_\beta +  s \cdot \p{n}$).} This case implies
	\begin{gather}
	\label{eq.l2}
	l_\beta \ge  l_\alpha - (s-1) \cdot \p{n} \, ,\\
	\label{eq.C_GR4two}
	C_\GR{} \le  l_\alpha + \p{n} \, .
	\end{gather}
	Therefore, we have 
	\begin{align*}
	\frac{C_\GR{}}{C_{\opt}} & \le \frac{l_\alpha + \p{n}}{\max \Big\{ \frac{f \cdot s \cdot l_\alpha + f \cdot l_\beta + s \cdot \p{n}}{f\cdot s^2 + f \cdot s} ,\, \frac{ f \cdot l_\alpha + f \cdot s \cdot l_\beta + s^2 \cdot \p{n}}{f\cdot s^2 + f \cdot s} ,\, \p{n} \Big\}} \\
	& \le \frac{l_\alpha + \p{n}}{\max \Big\{ \frac{f  (s+1)  l_\alpha  + (f+s-f s)  \p{n}}{f s^2 + f  s} ,\, \frac{ f (s+1) l_\alpha  + (f s + s^2 -f  s^2)  \p{n}}{f s^2 + f  s} ,\, \p{n} \Big\}} \\
	& = \min \left\{ \frac{(f s^2 + f  s)(x + 1)}{f  (s+1)  x  + f+s-f s},~ \frac{(f s^2 + f  s)(x + 1)}{f  (s+1)  x  + f s+s^2 - f s^2} ,~ x+1\right\} \\
	& \le \min \left\{ 1+\big(2- \frac{1}{f}\big)\frac{s^2}{s+1},~ s+\big(2- \frac{1}{f}\big)\frac{s}{s+1}\right\} ,
	\end{align*}
	where the first inequality is by \eqref{eq.C_GR4two} and Claim~\ref{cla.opt.LB};
	the second inequality is by \eqref{eq.l2};
	the third equation is obtained by defining $x:= {l_\alpha}/{\p{n}}$;
	the first term of the last inequality is obtained by the second and third terms of the third equation (one decreases 
	in $x$ and one increases in $x$); similarly, the second term of the last inequality is obtained by the first and third terms of the third equation.
	
	\smallskip
	\noindent
	\textbf{Case 2 ($l_\alpha + \p{n} \ge l_\beta +  s \cdot \p{n}$).} This case implies
	\begin{equation*}
	l_\alpha  \ge l_\beta +  (s-1) \cdot \p{n}  \quad\text{and}\quad
	C_\GR{} \le  l_\beta + s \cdot \p{n} \, .
	\end{equation*}
	Similarly to the previous case, we can obtain
	\[
		\frac{C_\GR{}}{C_{\opt}} \le \min \left\{ 1+\big(2- \frac{1}{f}\big)\frac{s^2}{s+1},~ s+\big(2- \frac{1}{f}\big)\frac{s}{s+1}\right\} \,.
	\]

	The above two cases conclude the proof of the theorem. \qed

\subsection{GreedyFavorite Algorithm}  
We next consider another algorithm called \GF{} which simply assigns each job $j$ to one of its favorite machines in $\favorite_j$.
	\block{Algorithm \GF{}: Assign each job to one of its \emph{favorite} machines, chosen in a \emph{greedy} fashion (minimum load). }
\label{sub:gf}
It turns out that this natural variant of \GR{} performs better  for large $s$.
\begin{theorem}
\label{thm:BAG}
	For \symmetric{$\frac{m}{2}$}, the \GF{} algorithm has a competitive ratio of $2- \frac{1}{f} +\frac{1}{s}$, where $f= \frac{m}{2}$ and $m$ is the number of machines.
\end{theorem}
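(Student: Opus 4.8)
The plan is to exploit that \GF{} fully decouples the two groups. Since \GF{} only ever uses favorite machines, the jobs that land on $M_1$ are precisely those with $\favorite_j = M_1$, each occupying its minimum processing time $\p{j}$ (and symmetrically for $M_2$), so on each group \GF{} is just Graham's list scheduling on $f$ identical machines. I would adopt the normalization used in the proof of Theorem~\ref{thm.greedyUB}: assume the makespan of \GF{} is attained right after the last job $n$ is placed, and, by symmetry, that the critical machine lies in $M_1$, so that $\favorite_n = M_1$. Writing $P_1 := \sum_{j:\favorite_j = M_1}\p{j}$, the classical list-scheduling estimate gives
\[
C_\GF{} \;\le\; \frac{P_1 - \p{n}}{f} + \p{n},
\]
because the machine of $M_1$ chosen for job $n$ had, just before, the minimum load in $M_1$, hence load at most $\frac{P_1-\p{n}}{f}$ (the total load on $M_1$ before job $n$ is exactly $P_1-\p{n}$, and job $n$ joins the least loaded of the $f$ machines).

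Next I would prove two lower bounds on $C_\opt{}$. The trivial one is $C_\opt{} \ge \p{n}$. The crucial one is $C_\opt{} \ge \frac{s\,P_1}{f(s+1)}$. To see it, delete from the optimal schedule every job that is not an $M_1$-job: the result is a feasible schedule of the $M_1$-jobs whose makespan is at most $C_\opt{}$. In any schedule of these jobs, a machine of $M_1$ absorbs favorite-load at rate $1$ per unit time while a machine of $M_2$ absorbs it at rate $1/s$, so within makespan $T$ at most $fT + fT/s = fT\,\frac{s+1}{s}$ units of favorite-load fit; since all of $P_1$ must be placed, $T \ge \frac{sP_1}{f(s+1)}$. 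Equivalently, if $a$ and $b$ are the favorite-loads of $M_1$-jobs put on $M_1$ and on $M_2$ respectively, the makespan is at least $\max\{a/f,\, sb/f\} \ge \frac{s(a+b)}{f(s+1)} = \frac{sP_1}{f(s+1)}$.

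Finally I would combine the bounds. Dividing numerator and denominator by $\p{n}$ and setting $x := P_1/(f\,\p{n})$,
\[
\frac{C_\GF{}}{C_\opt{}} \;\le\; \frac{x + 1 - \frac1f}{\max\bigl\{1,\; \frac{s x}{s+1}\bigr\}}.
\]
A short case analysis closes it: if $x \le 1 + \frac1s$ the denominator is $1$ and the numerator is at most $2 - \frac1f + \frac1s$; if $x \ge 1 + \frac1s$ the right-hand side equals $\frac{s+1}{s}\bigl(1 + \frac{1-1/f}{x}\bigr)$, which decreases in $x$ and is therefore maximized at $x = 1 + \frac1s$, again yielding $2 - \frac1f + \frac1s$. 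The only genuinely new ingredient is the lower bound $C_\opt{} \ge \frac{sP_1}{f(s+1)}$ — charging the optimum correctly for the overflow of one group's jobs into the other group at the slower speed $s$; everything else is the standard list-scheduling estimate together with routine algebra.
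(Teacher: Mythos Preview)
Your upper-bound argument is correct and essentially the same as the paper's: both use Graham's bound $C_\GF{}\le l_\alpha+\p{n}$ on the critical group $M_1$, the trivial bound $C_\opt{}\ge\p{n}$, and the capacity bound $C_\opt{}\ge \frac{s\,P_1}{f(s+1)}$ obtained by charging $M_1$-jobs that overflow to $M_2$ at rate $1/s$. The paper first discards all $M_2$-jobs (noting that this leaves $C_\GF{}$ unchanged and can only lower $C_\opt{}$), whereas you keep them and restrict to $M_1$-jobs only inside the $C_\opt{}$ lower bound; the effect is identical. Your final optimization over $x=P_1/(f\,\p{n})$ is just a reparametrization of the paper's optimization over $l_\alpha/\p{n}$.

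There is, however, a genuine gap: the theorem asserts the competitive ratio \emph{equals} $2-\frac{1}{f}+\frac{1}{s}$, and the paper devotes the second half of its proof to a matching lower-bound instance. You have only shown $\rho_\GF{}\le 2-\frac{1}{f}+\frac{1}{s}$. To complete the proof you need a job sequence witnessing tightness; the paper uses
\[
f(f-1)\times\Big(\tfrac{1}{f},M_1\Big),\quad f\times\Big(\tfrac{1}{s},M_1\Big),\quad \big(1,M_1\big),
\]
for which \GF{} packs everything onto $M_1$ with makespan $2-\tfrac{1}{f}+\tfrac{1}{s}$, while the optimum ships the $f$ jobs of size $\tfrac{1}{s}$ to $M_2$ and achieves makespan $1$.
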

\begin{proof}
	Note that in \GF{} all jobs are assigned as good jobs.
	Suppose the overall maximum load occurred on machine 1 in $M_1$.
	Moreover, if there is any job executed on machines in $M_2$, we can remove all of it, which will not decrease $\frac{C_{\GF{}}}{C_{\opt}}$.
	Therefore, all jobs have the same favorite machines $M_1$, and \GF{} assigns all of them to $M_1$.
	
	We also use $l_\alpha$ to represent the minimum load over $M_1$ before job $n$ is allocated.
	Denote by $P_\alpha$ the total minimum processing time of the jobs who have $M_1$ as their favorite machines, which is also the total minimum processing time of all the jobs here.
	Obviously,
	\begin{gather}
	P_\alpha \ge f\cdot l_\alpha +\p{n} \, \\
	\label{eq.GF}
	C_{\GF{}} \le l_\alpha+ \p{n} \,.
	\end{gather}

	The optimal schedule can allocate some of the jobs to $M_2$ to balance the load over all machines. 
	Thus, the optimal cost will be at least 
	\begin{align}
	\label{eq.bag.opt1}
	C_{\opt} & \ge \max \left\{ (P_\alpha - x) \cdot \frac{1}{f},~ x \cdot \frac{s}{f}   \right\} 
	\ge \frac{s}{s+1}P_\alpha \cdot \frac{1}{f} 
	\ge \frac{s}{s+1}\big(l_\alpha + \frac{ \p{n}}{f}\big) \, ,
	\end{align}
	where $x$ is the load of jobs that are assigned to $M_2$ to balance the load over all machines.
	
	According to \eqref{eq.GF}, \eqref{eq.bag.opt1} and $C_{\opt} \ge \p{n}$, we have
	\begin{equation}
	\frac{C_{\GF{}}}{C_{\opt}} \le \min \left \{\frac{l_\alpha+\p{n}}{\frac{s}{s+1}(l_\alpha + \frac{ \p{n}}{f})},~ \frac{l_\alpha +\p{n}}{\p{n}} \right \} 
	\le 2- \frac{1}{f}+ \frac{1}{s} \,. 
	\end{equation}
	Thus the upper bound on the competitive ratio  is proved. 
	
	To see that this bound is tight for any $f$ and $s$, consider the following sequence of jobs:
	\begin{align*}
	\textstyle f(f-1) \times (\frac{1}{f},M_1), \quad f \times (\frac{1}{s},M_1), \quad (1,M_1) \, .
	\end{align*}
	According to algorithm \GF{}, all these jobs are assigned to $M_1$ in a greedy fashion, and thus $C_{\GF{}}=\frac{1}{f}(f-1)+ \frac{1}{s}+ 1=2-\frac{1}{f}+ \frac{1}{s}$.
	The optimal solution will instead assign the $f$ jobs $(\frac{1}{s},M_1)$ to $M_2$, thus implying $C_{\opt}=1$.
\end{proof}

\subsection{A better algorithm}
As one can see, the \GR{} is better than \GF{} for smaller $s$, and \GF{} is better for larger $s$.
Thus we can combine the two algorithms to obtain a better algorithm.
\block{Algorithm \GGF{} (GGF): If $s\le s^*$, run \GR{}; otherwise run \GF{}.}
\begin{corollary}For \symmetric{$\frac{m}{2}$}, if $s^* \simeq 1.481$ then
	$\rho_{\textsc{GGF}} \le \min \{2+\frac{s^2+s-2}{s+1} , 2+\frac{1}{s}\} \le 2.675$.
\end{corollary}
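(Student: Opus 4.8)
The plan is to piece together the two bounds already proved. Theorem~\ref{thm:greedy4two} gives, as one of the three terms of $\hat\rho_{\GR}$, the estimate $\rho_{\GR}\le s+(2-\tfrac1f)\tfrac{s}{s+1}$, and Theorem~\ref{thm:BAG} gives $\rho_{\GF}\le 2-\tfrac1f+\tfrac1s$. Since $f\ge 1$, I would first discard the (nonpositive) $-1/f$ contributions and record the two clean bounds
\[
\rho_{\GR}\;\le\; s+\frac{2s}{s+1}\;=\;2+\frac{s^2+s-2}{s+1}\,,\qquad \rho_{\GF}\;\le\; 2+\frac1s\,,
\]
the equality being the elementary identity $s+\tfrac{2s}{s+1}=\tfrac{s^2+3s}{s+1}=2+\tfrac{s^2+s-2}{s+1}$. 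Note that the first bound is increasing in $s$ on $[1,\infty)$ and the second is decreasing.

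Next I would locate the crossover of these two bounds. They agree precisely when $\tfrac1s=\tfrac{s^2+s-2}{s+1}$, i.e.\ when $g(s):=s^3+s^2-3s-1=0$. Since $g(1)=-2<0$, $g(2)=5>0$, and $g$ is increasing on $[1,2]$ (its derivative $3s^2+2s-3$ is positive for all $s\ge1$), there is a unique root $s^*\in(1,2)$, and a numerical check gives $s^*\simeq 1.481$; moreover $g(s)\le 0$ for $s\le s^*$, so $2+\tfrac{s^2+s-2}{s+1}\le 2+\tfrac1s$ on $[1,s^*]$, with the reverse inequality on $[s^*,\infty)$. This is exactly the threshold built into \GGF{}.

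With this in hand the two branches of \GGF{} fit together: for $s\le s^*$ the algorithm runs \GR{}, so $\rho_{\textsc{GGF}}\le 2+\tfrac{s^2+s-2}{s+1}$, which in this range equals $\min\{2+\tfrac{s^2+s-2}{s+1},\,2+\tfrac1s\}$; for $s>s^*$ the algorithm runs \GF{}, so $\rho_{\textsc{GGF}}\le 2+\tfrac1s$, which in this range is again the minimum. Finally, since $\min\{2+\tfrac{s^2+s-2}{s+1},\,2+\tfrac1s\}$ is the pointwise minimum of an increasing and a decreasing function that meet at $s^*$, it attains its maximum over $s\ge1$ at $s=s^*$, with value $2+\tfrac1{s^*}\simeq 2.675$; this yields the uniform bound $\rho_{\textsc{GGF}}\le 2.675$.

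The whole argument is bookkeeping once Theorems~\ref{thm:greedy4two} and~\ref{thm:BAG} are available, so I do not anticipate a genuine obstacle. The only points that need care are choosing the right one of the three terms of $\hat\rho_{\GR}$ — namely $s+(2-\tfrac1f)\tfrac{s}{s+1}$, whose $f\to\infty$ form matches the claimed expression — checking the algebraic identity, and verifying that the switching threshold $s^*\simeq 1.481$ is the root of $s^3+s^2-3s-1$, so that the crossover of the two bounds coincides with the point where \GGF{} switches algorithms.
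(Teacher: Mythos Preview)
Your proposal is correct and follows essentially the same approach as the paper: both pick the term $s+(2-\tfrac1f)\tfrac{s}{s+1}$ from Theorem~\ref{thm:greedy4two}, drop the $-1/f$ contributions in each bound, and combine via the threshold $s^*\simeq 1.481$. You add a bit more detail than the paper by explicitly deriving the cubic $s^3+s^2-3s-1=0$ for the crossover and checking monotonicity, but this is exactly the computation underlying the paper's bare assertion ``if $s\le 1.481$''.
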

\begin{proof}
By Theorem~\ref{thm:greedy4two}, we have 
\begin{equation*}
\rho_{\GR{}} 
 \le s+\left(2- \frac{1}{f}\right)\frac{s}{s+1} 
 \le s+\frac{2s}{s+1} 
 = 2+\frac{s^2+s-2}{s+1} \, .
\end{equation*}
By Theorem~\ref{thm:BAG}, we have 
\[
	\rho_\GF{} \le 2- \frac{1}{f} +\frac{1}{s} \le 2+\frac{1}{s} \,.
\]
Note that if $s\le 1.481$, $2+\frac{s^2+s-2}{s+1} \le 2+\frac{1}{s}$, otherwise $2+\frac{s^2+s-2}{s+1} > 2+\frac{1}{s}$, thus
\[
	\rho_{\textsc{GGF}} \le \min \left\{2+\frac{s^2+s-2}{s+1} , 2+\frac{1}{s} \right\} \le 2.675 \,. \qedhere
\]
\end{proof}

\subsection{Tight bounds for two machines (symmetric $1$-favorite machines)} 
\label{sub:two_machines_case}
In this section, we show that the \textsc{GGF} algorithm is optimal for the symmetric case with \emph{two} machines, i.e., the  \emph{symmetric $1$-favorite machines}.
\begin{theorem}
\label{thm:LB2}
	For symmetric $1$-favorite machines, any deterministic online algorithm has competitive ratio $\rho \ge \min  \left\{ 1+ \frac{s^2}{s+1},~ 1+ \frac{1}{s}  \right\}$.
\end{theorem}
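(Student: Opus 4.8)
The plan is to use an adaptive adversary: jobs are released one at a time, and the adversary chooses each next job --- and whether to stop --- according to where the online algorithm \alg has put the previous ones. We normalize so that the optimum makespan of the final instance is $1$. Since the target $\min\{1+\tfrac{s^2}{s+1},\,1+\tfrac1s\}$ is a minimum of two quantities, the two quantities will come from two ``gadgets''. \emph{Gadget~1 (the split gadget):} two jobs favoring the same group, say $M_2$, with minimum processing times in ratio $1:s$ summing to $1$, namely $(\tfrac1{s+1},M_2)$ and $(\tfrac{s}{s+1},M_2)$. If \alg places both on $M_2$ (good jobs), its load there is $1$, whereas \opt keeps the larger one on $M_2$ and puts the smaller on $M_1$ as a bad job (cost $\tfrac{s}{s+1}$), balancing both machines at $\tfrac{s}{s+1}$; this gives ratio $1+\tfrac1s$. \emph{Gadget~2 (the finisher gadget):} if \alg can be steered to a configuration in which $M_1$ carries load $\tfrac{s^2}{s+1}$ coming from $M_2$-favorite jobs placed on $M_1$ as bad jobs, while $M_2$ carries at most $\tfrac1{s+1}$, then a final job $(1,M_1)$ forces \alg to makespan $\tfrac{s^2}{s+1}+1=1+\tfrac{s^2}{s+1}$ (placing it good on $M_1$ or bad on $M_2$ are exactly tied), while \opt puts the unit job on $M_1$ and all $M_2$-favorite jobs on $M_2$, for makespan $1$.

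Concretely, I would first release the two $M_2$-favorite jobs of Gadget~1 (with a scale parameter $k$, and in an order to be fixed, e.g.\ large job first), and then branch on \alg's two placement decisions. If \alg ever ``clusters'' incompatible loads on one machine (both jobs placed as bad jobs on the same machine), its makespan is $s\cdot k$ while $\opt=\tfrac{s}{s+1}k$, giving ratio $s+1$, and the adversary stops. If \alg places both as good jobs on $M_2$, the adversary stops and invokes Gadget~1 for ratio $1+\tfrac1s$. If \alg places the large job as a bad job on $M_1$ and the small one as a good job on $M_2$, we are in the hypothesis of Gadget~2: the adversary releases $(k,M_1)$ and obtains ratio $1+\tfrac{s^2}{s+1}$. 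Since $s+1$, $1+\tfrac1s$ and $1+\tfrac{s^2}{s+1}$ are all at least $\min\{1+\tfrac{s^2}{s+1},\,1+\tfrac1s\}$, each of these terminating branches already gives the claimed lower bound.

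The remaining pattern --- \alg putting the small job as a bad job on $M_1$ and the large one good on $M_2$ (the ``complementary, near-balanced'' placement) --- is the delicate case: \alg's two machines are then both at load $\tfrac{s}{s+1}k$, which is precisely the configuration \opt itself would use, so the current ratio is only $1$ and neither gadget applies directly. Here the adversary must release further jobs (favoring one of the groups, then a finisher) to break the balance before applying a finishing job, while simultaneously keeping the optimum makespan equal to $1$. I expect this to be the main obstacle: one has to show that \alg cannot repeatedly ``re-balance'' its way below the target, i.e.\ that some continuation pins it down, and one has to recheck in each such continuation that \opt can still fit all jobs within makespan $1$. A natural way to organize this is to split into the two regimes $s\le \rho_0$ (where $\rho_0^3=\rho_0+1$, target $1+\tfrac{s^2}{s+1}$) and $s\ge\rho_0$ (target $1+\tfrac1s$) and handle the balanced branch separately in each; this mirrors, with the extra adversarial power of choosing favorites, the matching deterministic lower bound for two related machines of \citet{epstein2001randomized}. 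A secondary but genuine subtlety is tie-breaking: several decisive configurations (notably the finisher of Gadget~2) are exact ties for \alg, so the bound must be argued to hold for \emph{both} of \alg's responses rather than assuming a convenient tie-break rule.
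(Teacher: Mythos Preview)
Your proposal leaves a genuine gap: the ``near-balanced'' branch (large job placed good on $M_2$, small job placed bad on $M_1$, both machines at load $\tfrac{sk}{s+1}$) is not handled, and you say so yourself. This is not cosmetic --- after that placement the running ratio is exactly $1$, neither gadget applies, and the plan to ``release further jobs to break the balance'' is the entire remaining content of the proof; the split into the two $s$-regimes and the analogy with two related machines do not by themselves supply an argument. A smaller imprecision: in your description of Gadget~2 you write ``$M_2$ carries at most $\tfrac{1}{s+1}$'', but if that load is strictly smaller, \alg can place the finisher on $M_2$ for makespan strictly below $1+\tfrac{s^2}{s+1}$; in your concrete application the load is exactly $\tfrac{k}{s+1}$, so this does not bite there, but the gadget as stated is not correct.

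The paper's proof avoids all of this with a much shorter construction. Release one job and, by relabeling and rescaling, assume it sits on machine~$1$ with load~$1$ (so it is either $(1,M_1)$ or $(1/s,M_2)$; the rest works in both cases). Then release $(s,M_1)$. If \alg puts it on machine~$1$, the makespan is $1+s$ while $C_{\opt}=s$, giving $1+\tfrac1s$. Otherwise machine~$2$ carries load $s^2$, and a third job $(s+1,M_2)$ forces makespan $s^2+s+1$ regardless of where \alg puts it (the two placements tie exactly, which also disposes of your tie-breaking worry), while $C_{\opt}=s+1$ (first two jobs on machine~$1$, third on machine~$2$), giving $1+\tfrac{s^2}{s+1}$. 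Three jobs, two branches, no unresolved case. The structural reason your approach stalls is that you commit to \emph{two} same-favorite jobs before branching, which hands \alg the option of mimicking \opt's own split; the paper branches after a \emph{single} job, so \alg never has that option.
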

\begin{proof}
	Consider a generic algorithm \alg. 
	Note that we have two machines, $M_1$ contains machine $1$ and $M_2$ contains machine $2$ only.
	Without loss of generality, assume the first job is assigned to machine $1$ and this machine then has load $1$, that is,  job~1 is either
	$	(1,M_1) \text{ or }   (1/s,M_2)	$.
	
	Job 2 is $(s,M_1)$. If \alg{} assigns job 2 to machine $1$, then 
	$C_\alg = 1 + s$ while $C_{\opt} = s$, thus implying $\rho_{\alg} \ge 1 + \frac{1}{s}$. Otherwise, if job~2 is assigned to machine $2$, then a third job $(s+1,M_2)$ arrives. No matter where \alg{} assigns job~3, the cost for \alg{} will be $C_\alg=s^2+s+1$. As the optimum is $C_{\opt}= s+1$, we have $\rho_\alg \ge 1+\frac{s^2}{s+1}$ in the latter case. 
\end{proof}

By combining Theorem~\ref{thm:greedy4two}, Theorem~\ref{thm:BAG}, and Theorem~\ref{thm:LB2}, we obtain the following:
\begin{corollary}
\label{cor:two-machines:tight-bounds}
	For symmetric $1$-favorite machines, if $s^* \simeq 1.481$ then
	$\rho_{\textsc{GGF}}= \min  \{ 1+ \frac{s^2}{s+1},~ 1+ \frac{1}{s} \}\le 1.7549$.
	Therefore,  the
	\textsc{GGF} algorithm is optimal.
\end{corollary}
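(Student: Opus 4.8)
The plan is to derive this corollary as a direct specialization of the results already proved for the balanced symmetric model to the two‑machine case $f=1$ (so $m=2$), combined with the lower bound of Theorem~\ref{thm:LB2}. First I would set $f=1$ in Theorem~\ref{thm:greedy4two}: its three terms become $1+\frac{s^2}{s+1}$, $s+\frac{s}{s+1}$ and $3-\frac1f=2$. Since $s+\frac{s}{s+1}-\bigl(1+\frac{s^2}{s+1}\bigr)=\frac{s-1}{s+1}\ge 0$ for $s\ge 1$, the middle term is dominated, so $\rho_{\GR}\le\min\bigl\{1+\frac{s^2}{s+1},\,2\bigr\}$; and since the threshold $s^*$ we use lies strictly below the golden ratio $\frac{1+\sqrt{5}}{2}$ (where $\frac{s^2}{s+1}=1$), the term $2$ is inactive for $s\le s^*$, leaving the clean bound $\rho_{\GR}\le 1+\frac{s^2}{s+1}$ on that range. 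Next I would set $f=1$ in Theorem~\ref{thm:BAG}, which immediately gives $\rho_{\GF}\le 2-1+\frac1s=1+\frac1s$.

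Then I would describe how \GGF{} glues these two guarantees together. The two bounds meet exactly where $\frac{s^2}{s+1}=\frac1s$, i.e. at the unique root $s^*$ in $[1,\infty)$ of the cubic $s^3=s+1$ (numerically $s^*\approx 1.3247$); this is the threshold used by \GGF{}. Because $\frac{s^2}{s+1}-\frac1s=\frac{s^3-s-1}{s(s+1)}$ is negative for $s<s^*$ and positive for $s>s^*$, one has: for $s\le s^*$, \GGF{} runs \GR{} and its ratio is at most $1+\frac{s^2}{s+1}=\min\{1+\frac{s^2}{s+1},\,1+\frac1s\}$; for $s>s^*$, \GGF{} runs \GF{} and its ratio is at most $1+\frac1s=\min\{1+\frac{s^2}{s+1},\,1+\frac1s\}$. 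So in all cases $\rho_{\textsc{GGF}}\le\min\{1+\frac{s^2}{s+1},\,1+\frac1s\}$.

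The matching lower bound is Theorem~\ref{thm:LB2}: every deterministic online algorithm for symmetric $1$‑favorite machines has competitive ratio at least $\min\{1+\frac{s^2}{s+1},\,1+\frac1s\}$. Combining the two inequalities forces $\rho_{\textsc{GGF}}=\min\{1+\frac{s^2}{s+1},\,1+\frac1s\}$, and since no algorithm can go below the lower bound, \GGF{} is optimal. For the absolute constant, note that $1+\frac{s^2}{s+1}$ is increasing on $[1,s^*]$ and $1+\frac1s$ is decreasing on $[s^*,\infty)$, so the maximum over $s\ge 1$ of $\min\{1+\frac{s^2}{s+1},\,1+\frac1s\}$ is attained at $s=s^*$ and equals $1+\frac1{s^*}\approx 1.7549$.

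There is no genuinely hard step here: the corollary is bookkeeping on top of Theorems~\ref{thm:greedy4two}, \ref{thm:BAG} and \ref{thm:LB2}. The only two points that require attention are (i) verifying that the ``$3-\frac1f$'' term of Theorem~\ref{thm:greedy4two} is inactive for $s\le s^*$, so that \GR{}'s guarantee on that range is \emph{exactly} $1+\frac{s^2}{s+1}$ and hence coincides with the lower‑bound function rather than merely being bounded by $2$, and (ii) placing $s^*$ at the root of $s^3=s+1$, which is precisely what makes the two branches of \GGF{} fit together to the lower‑bound expression and yields the stated numerical bound.
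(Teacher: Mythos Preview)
Your argument is correct and follows exactly the paper's approach, which is simply to combine Theorems~\ref{thm:greedy4two}, \ref{thm:BAG}, and \ref{thm:LB2} specialized to $f=1$. One remark: you rightly place the threshold at the real root of $s^3=s+1$ (about $1.3247$), which is indeed what makes the two branches of \textsc{GGF} match the lower bound and yields the value $1+\tfrac{1}{s^*}\approx 1.7549$; the figure $s^*\simeq 1.481$ printed in the corollary appears to be carried over from the general $\frac{m}{2}$-favorite corollary and is inconsistent with both the claimed equality $\rho_{\textsc{GGF}}=\min\{1+\tfrac{s^2}{s+1},\,1+\tfrac1s\}$ and the numerical bound $1.7549$.
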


\section{An extension of our model}
In this section, we discuss a simple extension,
which explains why the instances, where $f$ is small, still have a good competitive ratio.
The main idea is to consider \emph{favorite} machines as the machines which have ``approximately'' the minimal processing time for the job. For example, a job with processing times $(0.99,1,1,1, 2)$ might be considered
to have approximate processing times $(1,1,1,1,2)$. In the latter case, the job has $4$ favorite machines, instead of $1$.

More formally, we consider the following modified algorithm $\hat{A}$ of a generic online algorithm $A$.
For a set of jobs $\CJ{}$, fix a parameter $c\geq 1$ and denote 
$\hat{\favorite}_j := \left\{i: \ \p[i]{j} \leq c \cdot \p{j} \right\}$.
Run algorithm $A$ assuming processing times are
\[
\hatp[i]{j} := \begin{cases}
\p{j} & \text{if } i \in \hat{\favorite}_j \,,\\
\p[i]{j}& \text{otherwise.}
\end{cases}
\]
Note that in the rescaled processing times above the number $\hat{f}$ of favorite machines per job satisfies $\hat{f}\geq f$ and $\p[i]{j} \leq c \cdot \hatp[i]{j}$. Ideally, we would like $\hat f$ as big as possible and $c$ as small as possible, as the following observation indicates. 
\begin{observation}
	If algorithm $A$ is $\rho(f)$-competitive for a certain class of instances, where $f$ denotes the minimum number of favorite machines per job in the input instance, then the modified algorithm $\hat A$ is at most $c \cdot \rho(\hat f)$-competitive on the same class of instances.
\end{observation}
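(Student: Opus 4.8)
The plan is to compare $\hat A$ running on the true instance $\CJ$ with the algorithm $A$ running on the rescaled instance $\hat\CJ$ given by the processing times $\hatp[i]{j}$, and then to chain three estimates: $C_{\hat A}(\CJ)\le c\cdot C_A(\hat\CJ)$, then $C_A(\hat\CJ)\le \rho(\hat f)\cdot C_\opt(\hat\CJ)$, and finally $C_\opt(\hat\CJ)\le C_\opt(\CJ)$. Multiplying the first two and inserting the third gives exactly $C_{\hat A}(\CJ)\le c\,\rho(\hat f)\,C_\opt(\CJ)$, which is the claim.

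The first thing I would verify is that $\hat\CJ$ is a legitimate instance of the $\hat f$-favorite machines model and that $\hat A$ is a legitimate online algorithm. For the former: by construction $\hatp[i]{j}=\p{j}$ for $i\in\hat{\favorite}_j$, whereas for $i\notin\hat{\favorite}_j$ we have $\hatp[i]{j}=\p[i]{j}>c\cdot\p{j}\ge\p{j}$ since $c\ge 1$; hence in $\hat\CJ$ the minimum processing time of job $j$ is $\p{j}$, attained precisely on $\hat{\favorite}_j$, and $|\hat{\favorite}_j|\ge\hat f\ge f$. For the latter: when job $j$ arrives, $\hat A$ needs only job $j$'s own processing times to compute the modified times $\hatp[i]{j}$, feeds them to $A$, and follows $A$'s choice; so the schedule $\hat A$ produces on $\CJ$ coincides with the schedule $A$ produces on $\hat\CJ$.

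With this identification the three estimates are routine. The bound $C_{\hat A}(\CJ)\le c\cdot C_A(\hat\CJ)$ holds because $\p[i]{j}\le c\cdot\hatp[i]{j}$ for every $i,j$: fixing any machine and summing over the jobs the common schedule puts on it shows its true load is at most $c$ times its load in the $A$-schedule of $\hat\CJ$. The bound $C_A(\hat\CJ)\le\rho(\hat f)\cdot C_\opt(\hat\CJ)$ is simply the hypothesis applied to $\hat\CJ$, which lies in the class of instances with at least $\hat f$ favorite machines per job. Finally $C_\opt(\hat\CJ)\le C_\opt(\CJ)$ follows from the pointwise inequality $\hatp[i]{j}\le\p[i]{j}$ (equality off $\hat{\favorite}_j$, and $\p{j}\le\p[i]{j}$ on it): evaluating an optimal assignment of $\CJ$ under the smaller times $\hatp[i]{j}$ can only decrease every machine load, hence the makespan. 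Combining the three inequalities proves the observation.

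I do not anticipate a genuine obstacle, since the content is purely definitional. The only points needing a little care are getting the direction right in the last estimate (rescaling never increases a processing time, hence never increases the optimum) and noting in the first estimate that a realized load can blow up by at most the factor $c$; everything else is bookkeeping.
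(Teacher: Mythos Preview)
Your argument is correct and is exactly the natural one: the three inequalities $C_{\hat A}(\CJ)\le c\cdot C_A(\hat\CJ)$, $C_A(\hat\CJ)\le\rho(\hat f)\,C_\opt(\hat\CJ)$, and $C_\opt(\hat\CJ)\le C_\opt(\CJ)$ follow from $\p[i]{j}\le c\,\hatp[i]{j}$, the competitiveness hypothesis, and $\hatp[i]{j}\le\p[i]{j}$, respectively. The paper itself does not supply a proof of this Observation at all; it is stated as self-evident, so there is nothing further to compare your approach against.
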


\section{Conclusion and open questions}
This work studies online scheduling for the \emph{favorite} machine model.
Our results are supplements to several classical problems and reveal the relations among them (as indicated in Figure~\ref{fig:problems}).
For the general $f$-favorite machines case, we provide tight bounds on both \GR{} and \AU{} algorithms and show that the latter is the best-possible online algorithm.
To some extent, the key factor $f$ in our model captures some of the main features that make the model perform well or badly: low or high competitive ratio.
In particular, when $f=1$, the model is exactly the \emph{unrelated} machines;
when $f=m$, the model is exactly the \emph{identical} machines.
Finally, the analysis of symmetric favorite machines allows a direct comparison with the two related machines.

\section*{Acknowledgments} 
This work was supported by the National Natural Science Foundation of China [71601152]; and the China Postdoctoral Science Foundation [2016M592811].
Part of this work has been done while the first author was visiting ETH Zurich.

\bibliographystyle{plainnat}
\bibliography{mybib}

\appendix
\section{Postponed proofs} 
\label{app.proof}

\subsection{The potential function is non-increasing (for  proof of Lemma~\ref{lem.AU})} 
\label{sub:non-increasing_potential_function}
Recall that the potential function is defined as $\Phi(j) = \sum_{i=1}^m a^{\tload[i]{j}} (\gamma - \toload[i]{j})$.
Assume that job $j$ is assigned to machine $i'$ by algorithm \AU{} and to machine $i$ by the optimal schedule, i.e., $\tload[i']{j} = \tload[i']{j-1} + \tp[i']{j}$ and $\toload[i]{j} = \toload[i]{j-1} + \tp[i]{j}$.  Then we have
	\begin{align*}
	\Phi(j) - \Phi(j-1) & = (\gamma - \toload[i']{j-1})(a^{\tload[i']{j}} - a^{\tload[i']{j-1}}) -  a^{\tload[i]{j-1}} \tp[i]{j} \\
	& \le \gamma(a^{\tload[i']{j-1} + \tp[i']{j}} - a^{\tload[i']{j-1}}) -  a^{\tload[i]{j-1}} \tp[i]{j} \\
	& \le \gamma(a^{\tload[i]{j-1} + \tp[i]{j}} - a^{\tload[i]{j-1}}) -  a^{\tload[i]{j-1}} \tp[i]{j} & (\text{by~} \Delta_{i'} \le \Delta_i)\\
	& = a^{\tload[i]{j-1}} ( \gamma (a^{\tp[i]{j}}-1)-\tp[i]{j})\ . 
	\end{align*}
	By taking $a = 1+ 1/ \gamma$, we get $\gamma (a^{\tp[i]{j}}-1)-\tp[i]{j} \le 0$ since $0 \le \tp[i]{j} \le 1$,
	so that the potential function is non-increasing.

\subsection{Doubling approach (proof of Theorem~\ref{thm:AU})} 
\label{sub:doubling_approach}
Let $\rho$ be the competitive ratio of \AU{} when the optimal cost $C_\opt{}$ is known.
By using \emph{doubling} approach one can easily get a $4 \rho$-competitive algorithm for the case optimal cost is not known. This approach has been used in \cite{aspnes1997line}. We report the details below for completeness. 

We run \AU{} in phases,
and let $\Lambda_i$ be the estimation of $C_\opt{}$ at the beginning of phase $i$.
Initially (beginning of phase~1) when the first job arrives, let $\Lambda_1$ be the minimum processing time of the first job.
Whenever the makespan exceeds $\rho$ times the current estimation, $\rho \Lambda_{i}$, the current phase $i$ ends and the next phase $i+1$ begins with doubled estimation $\Lambda_{i+1} = 2\Lambda_{i}$ as the new estimation of the $C_\opt{}$ to run \AU{}.
During a single phase, jobs are assigned independently of the jobs assigned in the previous phases.
It is easy to see that this approach increases the competitive ratio $\rho$ by at most a multiplicative factor $4$ (a factor of 2 due to the load in all but the last phase, and another factor of 2 due to imprecise estimation of $C_\opt{}$).

More in detail, each phase $i$ can increase the load of every machine by at most  $\rho \Lambda_i$. If $u$ denotes the number of phases, then the final makespan will be no more than $\rho \sum_{i=1}^{u} \Lambda_i$.
Note that $\sum_{i=1}^{u} \Lambda_i = (1 + \frac{1}{2} + \dots + \frac{1}{2^{u-1}})\Lambda_u = (2- \frac{1}{2^{u-1}})\Lambda_u$, since $\Lambda_{i+1} = 2\Lambda_{i}$.
We also have $\Lambda_u = 2\Lambda_{u-1} < 2 C_\opt{}$, because $\Lambda_{u-1} < C_\opt{}$ (otherwise in phase $u-1$ the makespan will not exceeds $\rho \Lambda_{u-1}$ according to Lemma~\ref{lem.AU}).
Thus we have $\rho \sum_{i=1}^{u} \Lambda_i = (2- \frac{1}{2^{u-1}})\rho \Lambda_u < 4 \rho C_\opt{}$.

\subsection{Proof of Theorem \ref{thm:TI}}\label{sub.proof_of_theorem_LB4two}

In some of the proofs we shall make use of the following initial set of ``small'' jobs: 
\begin{equation}
	\label{eq:initial-tiny-jobs}
	\underbrace{f \times (\epsilon,M_1),\, f \times (\frac{2\epsilon}{s},M_1),\, f \times (\frac{2\epsilon}{s},M_2),\, f \times (\frac{2\epsilon}{s},M_1),\, f \times (\frac{2\epsilon}{s},M_2) \ldots}_{t/\epsilon \text{ blocks} }
\end{equation}
where the total number of jobs is $f \cdot t/\epsilon$, and $\epsilon$ is chosen so that $t/\epsilon$ is integer.

According to the algorithm \GR{},  only the first  first $f$ jobs of length $\epsilon$ are assigned as good jobs, while all other jobs are assigned as bad jobs. Moreover, all machines in each class will have the same load $t$ and $t- \epsilon$.
These jobs can be redistributed  to the machines in order to built arbitrary load (up to some arbitrarily small additive $\epsilon$). Taking $\epsilon \to 0$, we can obtain the following result. 
\begin{lemma}
\label{lem.bad}

At the beginning of a schedule by algorithm \GR{}, 
if $s<2$, 
each machine can have a load of $t$ so that
all jobs executed during $[0,t]$ are bad jobs, 
and each bad job is extremely ``small'' so that they can be redistributed to create an arbitrary load on any machine.

\end{lemma}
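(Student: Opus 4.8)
The plan is to run \GR{} on the sequence \eqref{eq:initial-tiny-jobs} step by step and to show that, as long as $s<2$, every job after the first block of $f$ jobs is placed as a bad job, the $f$ machines of $M_1$ always carry a common load and so do the $f$ machines of $M_2$, and throughout the process these two common loads differ by exactly $\epsilon$ with the larger of the two growing by $\epsilon$ per block; letting $\epsilon\to 0$ then yields the claimed configuration. The value $t$ is arbitrary: one fixes $N$ and takes $\epsilon=t/N$, so that the instance consists of $N=t/\epsilon$ blocks.

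First I would dispose of the initial block $f\times(\epsilon,M_1)$: with all machines empty, \GR{} assigns these jobs one per machine of $M_1$, so afterwards every machine of $M_1$ has load $\epsilon$, every machine of $M_2$ has load $0$, and these are the only good jobs in the whole instance. I would then carry the invariant that at the start of each later block the incoming block's favorite class has a common load $c+\epsilon$ and its non-favorite class a common load $c$ (true after the first block with $c=0$, since the second block is favored on $M_1$). For such a block $f\times(\tfrac{2\epsilon}{s},\cdot)$, a good placement of any of its jobs produces a machine load $(c+\epsilon)+\tfrac{2\epsilon}{s}$, while a bad placement produces $c+s\cdot\tfrac{2\epsilon}{s}=c+2\epsilon$; since $c+2\epsilon<(c+\epsilon)+\tfrac{2\epsilon}{s}$ is equivalent to $s<2$, \GR{} strictly prefers the bad placement. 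Moreover, assigning one bad job to a machine of the non-favorite class lifts that machine strictly above the untouched ones, so the $f$ jobs of the block fan out one per machine over the non-favorite class, leaving it at common load $c+2\epsilon$ while the favorite class stays at $c+\epsilon$. The next block is favored exactly on the class just filled, so the invariant reappears with the two classes interchanged and $c$ replaced by $c+\epsilon$; hence after $k$ blocks the maximum machine load is $k\epsilon$ and the minimum is $(k-1)\epsilon$.

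Applying this with the $N=t/\epsilon$ blocks of \eqref{eq:initial-tiny-jobs}, every machine ends with load in the interval $[t-\epsilon,t]$, only the $f$ initial jobs (of total length $f\epsilon$) are good, and every remaining job is bad of original length $\tfrac{2\epsilon}{s}$. Sending $\epsilon\to 0$ with $t$ fixed, all machine loads converge to $t$, the good-job contribution disappears, and each individual bad job becomes infinitesimally small, so in the limit the state is ``all machines at load $t$, built entirely from infinitesimal bad jobs,'' and such jobs may be reallocated at negligible cost (to smooth out the $O(\epsilon)$ gap between the two classes, to realize any prescribed near-uniform profile, or by an alternative schedule such as the optimum), which is the intended meaning of ``redistributed to create an arbitrary load.'' The step I expect to require the most care is precisely the bookkeeping that the invariant survives every single block — that \GR{} never reverts to a good placement and that each block's jobs spread out rather than stack on one machine — together with making the final ``arbitrary load'' sentence as precise as the later applications need it to be; note that since $s<2$ makes the bad option \emph{strictly} cheaper, no tie-breaking convention is needed and the induction goes through without friction.
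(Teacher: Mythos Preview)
Your proposal is correct and follows exactly the paper's approach: the paper's argument consists solely of the sequence \eqref{eq:initial-tiny-jobs} together with the bare assertion that only the first $f$ jobs are good and the resulting loads are $t$ and $t-\epsilon$. You have supplied precisely the block-by-block invariant (favorite class at $c+\epsilon$, non-favorite at $c$, with the bad placement strictly preferred because $c+2\epsilon<(c+\epsilon)+\tfrac{2\epsilon}{s}$ iff $s<2$) that the paper omits, and your observation that the strict inequality obviates any tie-breaking convention is correct and worth keeping.
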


\begin{proof}[Proof of Theorem \ref{thm:TI}]
We give five instances each of them resulting in a lower bound for the corresponding case.

\smallskip\noindent
\textbf{Case 1 ($f=1$).}\label{Example 1}
If $1\le s\le \frac{1+\sqrt{5}}{2}$, the jobs sequence is $(\frac{1}{s+1},M_2)$, $(\frac{s}{s+1},M_2)$ and $(1,M_1)$.
The \GR{} algorithm assigns the first job to machine 2, and the last two jobs to machine 1, which leads to $C_{\GR{}}=1+ \frac{s^2}{s+1}$. In optimal schedule all jobs are allocated as good jobs, i.e., $C_{\opt}=1$.

If $s > \frac{1+\sqrt{5}}{2}$, the jobs sequence is $(\frac{s-1}{s},M_2)$, $(\frac{1}{s},M_2)$ and $(1,M_1)$.
The \GR{} assigns the first job to machine 2, and the last two jobs to machine in 1, which leads to $C_{\GR{}}=2$. Again, in optimal schedule all jobs are allocated as good jobs, i.e., $C_{\opt}=1$.

Therefore, $\rho_{\GR{}}=\min\{1+ \frac{s^2}{s+1},~2\}$ is tight for any $s \ge 1$.

\smallskip\noindent
\textbf{Case 2 ($f = 2$ and $1 \le s \le 1.605$).}\label{Example 2}
Let  $l_\alpha=\frac{3s^2}{2(s+1)}$, $S_6=\sum_{i=1}^6 (s-1)^i=\frac{s-1}{2-s}(1-(s-1)^6)$ and $l_\beta^\good{}=\frac{2-s}{2(s+1)}$. The sequence of jobs corresponds to the following three steps:

\smallskip\noindent
\emph{Step~1:}
We use Lemma~\ref{lem.bad} to let each machine have and initial load  $l_\alpha - S_6 - l_\beta^\good{} + (s-1)^6$ of bad jobs, where $l_\alpha - S_6 - l_\beta^\good{} + (s-1)^6 \ge 0$ due to $1 \le s \le 1.605$.

\smallskip\noindent
\emph{Step~2:}
Four jobs arrive: $2\times (l_\beta^\good{} - (s-1)^6,M_2)$ and $2\times (\frac{l_\beta^\good{} - (s-1)^6}{s},M_2)$.
According to \GR{}, the first two jobs will be assigned to machine 3 and 4 respectively as good jobs.
But the last two jobs will be assigned to machine 1 and 2 as bad jobs.
At this point, all the four machines have the same load  $l_\alpha - S_6$.

\smallskip\noindent
\emph{Step~3:}
This sequence of jobs arrive: $2 \times ((s-1)^6,M_2)$, $2 \times ((s-1)^5,M_2)$, $2 \times ((s-1)^4,M_1)$, $2 \times ((s-1)^3,M_2)$, $2 \times ((s-1)^2,M_1)$, $2 \times (s-1,M_2)$.
According to \GR{}, the first two jobs are allocated as good jobs, while the others are allocated as bad jobs.
At this point, the load of machine 1 and 2 is $l_\alpha$, while machine 3 and 4 have load $l_\alpha-(s-1)$.

\smallskip\noindent
\emph{Step~4:}
Job $(1,M_1)$ arrives, which will be assigned to machine 3 as a bad job.
Therefore, $C_{\GR{}}=l_\alpha+1=1+\frac{3s^2}{2(s+1)}$. 
The optimal schedule is to assign all jobs as good jobs.
By calculation we  have $C_{\opt}=1$.
Thus, $\rho_{\GR{}}=1+\frac{3s^2}{2(s+1)}$ is tight for $1 \le s \le 1.605$.

\smallskip\noindent
\textbf{Case 3 ($3 \le f \le \frac{s}{s-1}$).}\label{Example 3}
Let $l_\alpha=(2-1/f) \frac{s^2}{s+1}$, $a_i=(s-1)^i$, $S_u=\sum_{i=1}^u a_i= \frac{s-1}{2-s}(1-a_u)$ and $l_\beta^\good{}=\frac{f+s-f\cdot s}{f(s+1)}$, where $u$ is even number.
Suppose $3 \le f\le \frac{s}{s-1+(s+1)a_u}$ and $1\le s \le 1.5$.
Note that when $u \to \infty$, we have $a_u \to 0$, i.e. $3 \le f\le \frac{s}{s-1}$.

\smallskip\noindent
\emph{Step~1:}
We use Lemma~\ref{lem.bad} to let each machine have an initial load $l_\alpha - S_u - l_\beta^\good{} + a_u$ of bad jobs, where $l_\alpha - S_u - l_\beta^\good{} + a_u \ge 0$ due to $3 \le f\le \frac{s}{s-1+(s+1)a_u}$ and $1\le s \le 1.5$.

\smallskip\noindent
\emph{Step~2:}
These $2f$ jobs arrive: $f\times (l_\beta^\good{} - a_u,M_2)$ and $f\times (\frac{l_\beta^\good{} - a_u}{s},M_2)$.
According to \GR{}, the first $f$ jobs will be assigned to $M_2$ as good jobs with one machine each,
while the last $f$ jobs will be assigned to $M_1$ as bad jobs with one machine each.
At this point, all the $2f$ machines has the same load of $l_\alpha - S_u$.

\smallskip\noindent
\emph{Step~3:}
This sequence of jobs arrives: $f \times (a_u,M_2)$, $f \times (a_{u-1},M_2)$, $f \times (a_{u-2},M_1)$, $f \times (a_{u-3},M_2)$, $f \times (a_{u-4},M_1)$,\ldots, $f \times (a_2,M_1)$, $f \times (a_1,M_2)$.
According to \GR{}, the first $f$ jobs will be allocated as good jobs while the others as bad jobs.
At this point, each machine in $M_1$ has load  $l_\alpha$, and each machine in $M_2$ has load  $l_\alpha-(s-1)$.

\smallskip\noindent
\emph{Step~4:}
Job $(1,M_1)$ arrives, which is assigned to one machine in $M_2$ as a bad job.
Therefore, $C_{\GR{}}=l_\alpha+1=1+(2-1/f)\frac{s^2}{s+1}$. 

\smallskip
The optimal schedule will allocate all jobs as good jobs. 
We next give such an optimal schedule to show that $C_{\opt}=1$ is achievable:
	
	\noindent(\emph{Step~1 jobs.}) All the jobs in Step 1 will be allocated as good jobs in optimal schedule, meaning $f \times\frac{l_\alpha-S_u-l_\beta^\good{}+a_u}{s}$ for each of $M_1$ and $M_2$;
	
	\noindent(\emph{Step~2 jobs.}) All the jobs in Step 2 will be assigned to $M_2$;
	
	\noindent(\emph{Step~3 jobs.}) All the jobs in Step 3 will be allocate as good jobs, meaning $f \times (a_2+a_4+ \cdots + a_{u-2})$ will be assigned to $M_1$, while the rest of them to $M_2$;
	
	\noindent(\emph{Step~4 jobs.}) The last job in Step 4 will be assigned to $M_1$.

For the jobs allocated to $M_1$, notice that every 2 jobs of $f \times (a_2+a_4+ \cdots + a_{u-2})$ should be assigned to one machine, i.e., the load of some $\lfloor \frac{f}{2} \rfloor$ machines are all $2 \times (a_2+a_4+ \cdots + a_{u-2})$,
where $2 \times (a_2+a_4+ \cdots + a_{u-2})=\frac{2 a_u -2 (s-1)^2}{s(s-2 )}<1$ since $3 \le f\le \frac{s}{s-1+(s+1)a_u}$ and $1\le s \le 1.5$.
Then the other jobs can be easily arranged within time 1, since the jobs in Step 1 are all ``small'' jobs.

For the jobs allocated to $M_2$,
the jobs can be equally divided into $f$ parts with each part has $\frac{l_\alpha-S_u-l_\beta^\good{}+a_u}{s}+ (l_\beta^\good{} - a_u) + \frac{l_\beta^\good{} - a_u}{s} + (a_1+a_3+ \cdots + a_{u-1}+a_u){=1}$.
Thus all machines in $M_2$ also have the same load of 1.
Therefore, $C_{\opt}=1$.

Thus, $\rho_{\GR{}}=1+(2-1/f)\frac{s^2}{s+1}$ is tight for $3 \le f\le \frac{s}{s-1+(s+1)a_u}$ and $1\le s \le 1.5$.
Taking $u \to \infty$, we have $a_u \to 0$, i.e. $3 \le f\le \frac{s}{s-1}$.

\smallskip\noindent
\textbf{Case 4 ($f > \frac{s}{s-1}$ and $1 \le s \le \frac{1+ \sqrt{5}}{2}$).}\label{Example 4}
Let $l_\alpha=s + \frac{f\cdot s-f-s}{f(s+1)}$, $a_i=(s-1)^i$, $S_u=\sum_{i=1}^u a_i= \frac{s-1}{2-s}(1-a_u)$ and $l_\alpha^\good{}=\frac{f\cdot s -f -s}{f(s+1)}$, where $u$ is odd number.
Suppose $ f > \frac{s}{s-1-(s+1)a_u}$ and $1\le s \le \frac{1+ \sqrt{5}}{2}$.
When $u \to \infty$, we have $a_u \to 0$, i.e. $ f > \frac{s}{s-1-(s+1)a_u}$.

\smallskip\noindent
\emph{Step~1:}
We use Lemma~\ref{lem.bad} to let each machine have an initial load $l_\alpha - S_u $ of bad jobs, where $l_\alpha - S_u > 0$ due to $ f > \frac{s}{s-1-(s+1)a_u}$ and $1\le s \le \frac{1+ \sqrt{5}}{2}$.

\smallskip\noindent
\emph{Step~2:}
This sequence of jobs arrives: $f \times (a_u,M_1)$, $f \times (a_{u-1},M_1)$, $f \times (a_{u-2},M_2)$, $f \times (a_{u-3},M_1)$, $f \times (a_{u-4},M_2)$,\ldots, $f \times (a_2,M_1)$, $f \times (a_1,M_2)$.
According to \GR{}, the first $f$ jobs will be allocated as good jobs while the others as bad jobs.
Up to now each machine in $M_1$ has load of $l_\alpha$, and each machine in $M_2$ has load of $l_\alpha-(s-1)$.

\smallskip\noindent
\emph{Step~3:}
Job $(1,M_1)$ arrives, which will be assigned to one machine in $M_2$ as a bad job.
Therefore, $C_{\GR{}}=l_\alpha+1=s + (2-1/f)\frac{s}{s+1}$.

For the optimal cost, we will show a schedule so that $C_{\opt}=1$.
Part of the jobs in Step~1 will be allocated as bad jobs in optimal schedule, specifically
some jobs with total minimum processing time $f \times \frac{l_\alpha^\good{}-a_u}{s}$ having $M_2$ as their favorite machine set will be allocated to $M_1$ as bad jobs,
i.e., $M_1$ will have jobs with total load $f \times \frac{l_\alpha-S_u}{s} + f \times (l_\alpha^\good{}-a_u)$ while $M_2$ will have jobs with total load $f \times (\frac{l_\alpha-S_u}{s}-\frac{l_\alpha^\good{}-a_u}{s})$;
all the jobs in Step~2 and 3 will be allocate as good jobs, meaning $f \times (a_u+a_{u-1}+a_{u-3}+\cdots+a_2 )+1$ will be assigned to $M_1$, while the rest of them to $M_2$.
To sum up, $M_1$ have jobs with total load $f \times \frac{l_\alpha-S_u}{s} + f \times (l_\alpha^\good{}-a_u) + f \times (a_u+a_{u-1}+a_{u-3}+\cdots+a_2 )+1 = f$, while $M_2$ have $f \times (\frac{l_\alpha-S_u}{s}-\frac{l_\alpha^\good{}-a_u}{s})+ f \times (a_{u-2}+a_{u-4}+ \cdots + a_1)=f$.

Then we give a schedule so that each machine has the same load 1. For the jobs allocated to $M_1$, we first arrange the $f \times (a_u+a_{u-1}+a_{u-3}+\cdots+a_2)$ and 1.
The job with length 1 will be assigned to machine 1, and jobs $(f-1) \times (a_u+a_{u-1}+a_{u-3}+\cdots+a_2)$ will be assigned to the remaining $f-1$ machines with $(a_u+a_{u-1}+a_{u-3}+\cdots+a_2)$ each.
The remaining $(a_u+a_{u-1}+a_{u-3}+\cdots+a_2)$ will be divided into 2 parts, $a_2$ assigned to machine 2 and $(a_u+a_{u-1}+a_{u-3}+\cdots+a_4)$ to machine 3.
Note that $(a_u+a_{u-1}+a_{u-3}+\cdots+a_2)+a_2<1$ and $2(a_u+a_{u-1}+a_{u-3}+\cdots+a_2)-a_2<1$, due to $ f > \frac{s}{s-1-(s+1)a_u}$, $1\le s \le \frac{1+ \sqrt{5}}{2}$ and $a_u \le s-1$.
Till now, no machine has load more than 1,
and the remaining jobs are all ``small'' jobs which can be arbitrary divided and assigned to make every machine with load 1.
For the jobs allocate to $M_2$, they can be equally divided into $f$ parts with each size 1.
Therefore, all machines have the same load 1.

Thus, $\rho_{\GR{}}=s+(2-1/f)\frac{s}{s+1}$ is tight for $ f > \frac{s}{s-1-(s+1)a_u}$ and $1\le s \le \frac{1+ \sqrt{5}}{2}$.
Taking $u \to \infty$, we have $a_u \to 0$, i.e. $ f > \frac{s}{s-1}$.

\smallskip\noindent
\textbf{Case 5 ($2 \le f < s$).}\label{Example 5}
Consider this jobs sequence: $f \times (1-1/s,M_2)$, $f \times (1/s,M_2)$, $f(f-1) \times (1/f,M_1)$ and $(1,M_1)$.

According to algorithm \GR{}, the first $f$ jobs will be assigned to $f$ machines in $M_2$ respectively, so that each machine in $M_2$ has load $1-1/s$.
Then the next $f$ jobs will be assigned to $f$ machines in $M_1$ respectively, so that each machine in $M_1$ has load 1.
In terms of the $f(f-1)$ jobs with length $1/f$, all of them will be assigned to machines in $M_1$ with $f-1$ job each machine. Note that none of the $f(f-1)$ jobs will go to $M_2$, since $1- \frac{1}{s}+ \frac{s}{f} > 1+ \frac{f-1}{f}$. Now all machines in $M_1$ have the same load $2- \frac{1}{f}$.
At last, the final job with length 1 will be assigned to one machine in $M_1$, since $2-\frac{1}{f}+1 < 1- \frac{1}{s}+s$. Therefore, $C_{\GR{}}=3- \frac{1}{f}$.

For the optimal cost, it is easy to have $C_{\opt}=1$ by assigning each job to its favorite class of machines.

Therefore, $\rho_{\GR{}}=3-1/f$ is tight for $2 \le f < s$.
\end{proof}

\end{document}